\begin{document}
\pagestyle{plain}
\newtheorem{theorem}{Theorem}
\newtheorem{proof}{Proof}

\title{Beamforming Design for Integrated Sensing and Communication with Extended Target}

\makeatletter
\newcommand{\linebreakand}{%
  \end{@IEEEauthorhalign}
  \hfill\mbox{}\par
  \mbox{}\hfill\begin{@IEEEauthorhalign}
}
\makeatother
\author{Yiqiu Wang, Meixia Tao, and Shu Sun\\
Department of Electronic Engineering, Shanghai Jiao Tong University, Shanghai, China \\
Emails: \{wyq18962080590, mxtao, shusun\}@sjtu.edu.cn
}

\maketitle

\thispagestyle{plain}

\begin{abstract}
This paper studies transmit beamforming design in an integrated sensing and communication (ISAC) system, where a base station sends symbols to perform downlink multi-user communication and sense an extended target simultaneously. We first model the extended target contour with truncated Fourier series. By considering echo signals as reflections from the valid elements on the target contour, a novel Cramér-Rao bound (CRB) on the direction estimation of extended target is derived. We then formulate the transmit beamforming design as an optimization problem by minimizing the CRB of radar sensing, and satisfying a minimum signal-to-interference-plus-noise ratio requirement for each communication user, as well as a 3-dB beam coverage requirement tailored for the extended sensing target under a total transmit power constraint. In view of the non-convexity of the above problem, we employ semidefinite relaxation (SDR) technique for convex relaxation, followed by a rank-one extraction scheme for non-tight relaxation circumstances. Numerical results show that the proposed SDR beamforming scheme outperforms benchmark beampattern design methods with lower CRBs for the circumstances considered.
\end{abstract}
\let\thefootnote\relax\footnotetext{This work is supported by the NSF of China under Grant 62125108 and Grant 62271310, and by the Fundamental Research Funds for the Central Universities of China.}
\section{Introduction}
\IEEEPARstart{C}{ommunication} and sensing share similar development trends, including higher frequency bands, larger antenna arrays and hardware miniaturization. Integrated sensing and communication (ISAC) has become one of the key usage scenarios of 6G. Among various ISAC systems, the combination of radar sensing and communication is a common form. However, the fundamental goals to be achieved by radar and communication systems are completely different. Traditional radar designs focus on extracting target information from the reflected echo signals, whereas communication aims to transmit information accurately at minimal cost. As such, a major design issue of ISAC is how to accommodate the diverse design objectives of communication and radar sensing.

Recently there have been many research efforts devoted to the beamforming design for joint radar sensing and communication. One typical approach is to take into account both communication and radar sensing metrics in the beamforming optimization problem, thereby achieving performance tradeoff for communication and sensing. While data rate and bit error rate are commonly employed to evaluate the communication performance, Cramér-Rao bound (CRB) defines the lower bound of the variance for any unbiased estimator, and thus serves as a widely used sensing performance metric.

Deriving closed-form CRBs, especially in the extended target cases, has always been the focus of ISAC literature. In particular, by assuming there exists line-of-sight (LoS) between an extended sensing target (EST) and a base station (BS), the authors in \cite{ref4} estimate the whole response matrix and obtain a closed-form CRB expression as a function of the precoding matrix. The authors in \cite{ref5} consider the non-line-of-sight (NLoS) ISAC scenario and derive the response matrix CRB for reconfigurable intelligent surface aided sensing. Although further information, such as the target range or direction, can be extracted from the response matrix with sophisticated signal processing algorithms, it is awfully unintuitive and unprecise to use CRB on the intermediate response matrix to represent estimation accuracy of the true desired parameters. In regard of the problem above, the authors in \cite{ref6} propose an elaborate CRB analysis on the estimation for central range, direction, and orientation of an extended target with known and unknown contours. Nevertheless, this work merely considers the multiple-input single-output scenario. The explicit CRB expression is also too complicated to construct a solvable optimization problem, especially when the communication function is integrated along with the sensing task. To the best of our knowledge, there is limited work studying the CRB optimization of EST in the multi-user multiple-output multiple-input (MU-MIMO) ISAC scenario.

In this paper, we propose a transmit beamforming design framework for MU-MIMO ISAC systems, with a specific emphasis on the optimization of CRB for extended target sensing. The main contributions of this work are summerized as follows:

\begin{itemize}
    \item First, we derive a novel closed-form CRB expression for the direction estimation performance of the extended target. This expression reveals the main dependence of CRB upon sensing path loss, noise ratio, number of transceiver antennas, and signal covariance matrix.
    \item Second, we formulate a CRB minimization problem by optimizing the transmit beamforming subject to communication-specific constraints and beam coverage constraints. Due to the non-convex fraction and quadratic structure of the CRB, we propose a semidefinite relaxation (SDR) based algorithm to obtain sub-optimal solutions. This algorithm also consists of a novel procedure to extract rank-one beamformers for non-tight relaxation scenarios.
    \item Compared with existing designs, numerical results show that our proposed SDR design effectively boosts sensing performance in terms of CRB.
\end{itemize}

\textit{Notation:} $\left[\cdot\right]^T$, $\left[\cdot\right]^H$, $\left[\cdot\right]^*$ denote, respectively, the transpose, Hermitian transpose, and conjugate of a matrix; $\mathbb{E}\left[\cdot\right]$ denotes the mean of variables; $\mathbf{0}_{m \times n}$ and $\mathbf{I}_{m}$ denote an ${m \times n}$ matrix with all zero elements, and an ${m \times m}$ identity matrix, respectively; $\mathcal{R\left(\cdot\right)}$ and $\mathcal{I\left(\cdot\right)}$ respectively denote the real and imaginary part of a complex number; $\mathcal{CN}\left(\mathbf{0}_{m\times 1},\sigma^{2}\mathbf{I}_m\right)$ denotes the probability density of an ${m\times 1}$ circularly symmetric complex Gaussian vector with zero mean and covariance matrix $\sigma^{2}\mathbf{I}_m$; $\mathbb{R}^{m\times n}$ and $\mathbb{C}^{m\times n}$ denote a matrix with ${m\times n}$ real elements and ${m\times n}$ complex elements, respectively; $\Delta _{{{\boldsymbol{\theta} }_{1}}}^{{\boldsymbol{\theta }_{2}}}\left[ \cdot \right]$ denotes the second derivative over ${\boldsymbol{\theta} }_{1}$ and ${\boldsymbol{\theta} }_{2}$; $\mathbf{A} \geq 0$ denotes that matrix $\mathbf{A}$ is semi-definite; $\Vert\cdot\Vert$ denotes the $l_2$ norm.

\section{System Model}
We consider a downlink ISAC system. A monostatic MIMO radar and an MU-MIMO communication transmitter are integrated inside a BS, which is equipped with {$N_t$} transmit antennas and {$N_r$} receive antennas. The BS sends wireless signals to perform radar sensing tasks for one extended target and to communicate with $C$ single-antenna communication users (CUs) simultaneously. The CUs are capable of decoding communication messages based on their own received signals. At the same time, the BS receives echo signals reflected from the surface of the extended target, from which the unknown parameters of the extended target, including the central range, direction, and orientation, can be extracted with specific methods. To guarantee the feasibility of basic radar sensing and communication function, we assume $C \leq N_t \leq N_r$ throughout the paper.

The transmitted signal is expressed as

\begin{equation}
\label{eq_l1}
\mathbf{x}(t) = \mathbf{W}_c \mathbf{c}(t),
\end{equation}
where $\mathbf{W}_c = \left[\mathbf{w}_1,...,\mathbf{w}_C \right] \in \mathbb{C}^{N_{t}\times C}$ and $\mathbf{c}(t) = \left[{c}_{1}(t),...,{c}_{C}(t) \right]^{T}$ are the precoding matrix and communication symbols of $C$ CUs respectively. Here, $\mathbf{w}_c$ and $c_{c}(t)$ are respectively the beamforming vector and transmitted communication symbol of the $c$-th CU. 

We assume that communication signals are temporally-white and wide-sense stationary stochastic process with zero-mean. Further, communication symbols intended for different CUs are uncorrelated with each other, i.e. $\mathbb{E} \left[\mathbf{c}(t) \mathbf{c}^H (t) \right] = \mathbf{I}_{C}$. The covariance matrix of transmitted signals is presented as
\begin{equation}
\label{eq_l2}
\mathbf{R}_x = \mathbb{E} \left[\mathbf{x}(t) \mathbf{x}^H (t) \right] = \mathbf{W}_c \mathbf{W}_c^H.
\end{equation}

The received signal of the $c$-th CU at time $t$ is presented as
\begin{equation}
\label{eq_l3}
{y}_{c}(t) = \mathbf{h}_{c}^{H} \mathbf{x}(t) + {z}_{c}(t) = \mathbf{h}_{c}^{H} \sum\nolimits_{c=1}^C \mathbf{w}_{c}(t)c_{c}(t) + {z}_{c}(t),
\end{equation}
where ${y}_c(t)$ is the received signal, ${z}_c(t)$ is the additive white Gaussian noise (AWGN) with zero-mean and $\sigma_c^2$ variance, and $\mathbf{h}_c \in \mathbb{C}^{N_{t}}$ is the narrowband communication channel vector following the Saleh-Valenzuela model.

To guarantee the communication quality at each CU, the precoder should be designed to achieve a certain level of signal-to-interference-plus-noise ratio (SINR) for all users. For the $c$-th CU, the SINR is defined as

\begin{equation}
\label{eq_l4}
{{\gamma }_{c}}={{{\left| \mathbf{h}_{c}^{H}{{\mathbf{w}}_{c}} \right|}^{2}}}/\left({\sum\nolimits_{i=1,i\ne c}^{C}{{{\left| \mathbf{h}_{c}^{H}{{\mathbf{w}}_{i}} \right|}^{2}}+\sigma _{c}^{2}}}\right).
\end{equation}

\section{Extended Target Sensing}
\subsection{Extended Target Contour Model}

\begin{figure}[!t]
\centering
\includegraphics[width=3.5in]{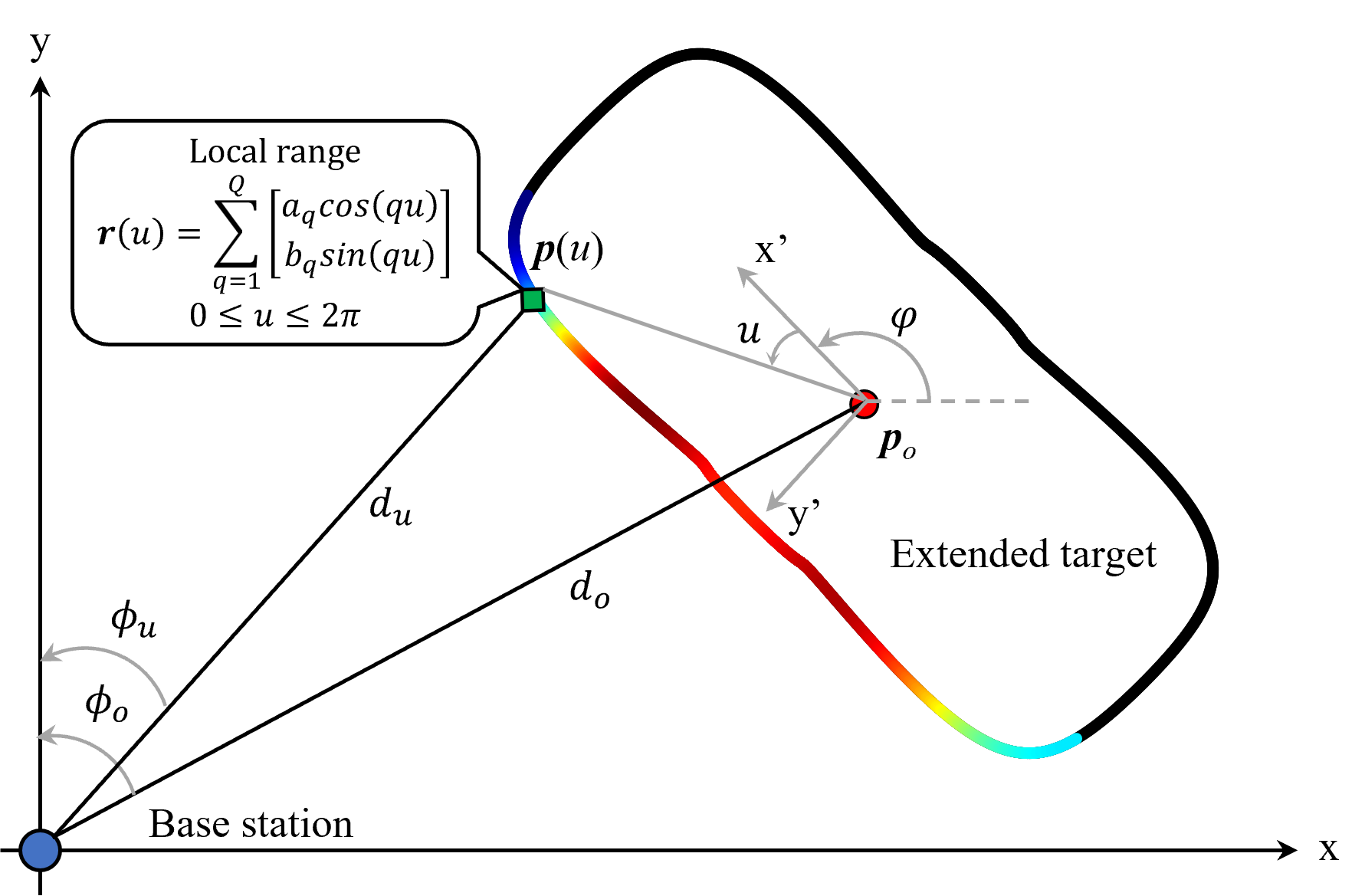}
\caption{Bird-eye view of an extended sensing target with TFS contour. The different colors on the contour with LoS link to the BS indicate strength of reflected echo signals, from red (maximum) to blue (minimum). The NLoS contour are plotted with black curve which provides no reflection.}
\label{fig1}
\end{figure}

In our work, the effective echo signals received at the BS are assumed to be reflected from the target contour elements with LoS link to the BS. We build on the truncated Fourier series (TFS) model used in \cite{ref6} to derive an analytical EST contour.

As presented in Fig. $\ref{fig1}$, the extended target is located at the origin of the local coordinate system where the $+x$ axis is aligned with the heading of the target. The target orientation is defined as the angle $\varphi$ from the global $+x$ axis to the local $+x$ axis. The local coordinate of each contour element is represented by the TFS with $2Q$ coefficients as a function of the local direction $u$. As such, the contour can be generated for $0\le u\le 2\pi$ in the local coordinate, by

\begin{equation}
\label{eq_l5}
\mathbf{r}\left( u \right)=\sum\limits_{q=1}^{Q}{\left[ \begin{matrix}
    {{a}_{q}}\cos \left( qu \right) \\ 
    {{b}_{q}}\sin \left( qu \right) \\ 
\end{matrix} \right]}=\left[ \begin{matrix}
   {{\boldsymbol{\sigma }}^{T}}\mathbf{m}  \\
   {{\boldsymbol{\varsigma }}^{T}}\mathbf{n}  \\
\end{matrix} \right],
\end{equation}
where $\boldsymbol{\sigma }={{\left[ \cos \left( u \right),...,\cos \left( Qu \right) \right]}^{T}}$ is the cosine harmonics, $\boldsymbol{\varsigma }={{\left[ \sin \left( u \right),...,\sin \left( Qu \right) \right]}^{T}}$ is the sine harmonics, $\mathbf{m}={{\left[ {{a}_{1}},...,{{a}_{Q}} \right]}^{T}}$is the TFS cosine coefficient and $\mathbf{n}={{\left[ {{b}_{1}},...,{{b}_{Q}} \right]}^{T}}$ is the TFS sine coefficient.

Assume the BS is located at the origin of the global coordinate system and $+x$ axis is aligned with the antenna orientation. The displacement of the target at a given position $\mathbf{p}_o$ can be presented as
\begin{equation}
\label{eq_l6}
{{\mathbf{p}}_{o}}={{\left[ {{d}_{o}}\sin {{\phi }_{o}}\text{   }{{d}_{o}}\cos {{\phi }_{o}} \right]}^{T}},
\end{equation}
where ${{d}_{o}}$ represents the range between the BS and the target center, ${{\phi }_{o}}$ is the target direction. For a specific element on the contour, its global displacement can be expressed as
\begin{equation}
\label{eq_l7}
\mathbf{p}\left( u \right)={{\mathbf{p}}_{o}}+\mathbf{V}\mathbf{r}\left( u \right),
\end{equation}
where $\mathbf{V=}\left[ \begin{matrix} \cos \varphi  & -\sin \varphi \\ \sin \varphi  & \cos \varphi \\ \end{matrix} \right]$ is the spin matrix at orientation $\varphi$ with regard to the $+x$ axis in the global coordinate system. Finally, we obtain the complete target contour, defined as $\mathcal{C}=\left\{ \mathbf{p}\left( u \right):0\le u\le 2\pi \right\}$.

\subsection{Received Sensing Signal Model}
We consider a full-duplex system, where the BS captures the echo signal reflected from the LoS contour of EST. We define $\mathcal{C}_{LoS}$ and ${u}_{LoS}$ as the target LoS contour and local direction of LoS contour elements respectively. While the target contour is a continuous curve on the 2D plane, we can split $\mathcal{C}_{LoS}$ into $K$ disjoint subsections with an angular interval of $\Delta u={{u}_{LoS}}/K$, then we have ${\mathcal{C}_{LoS}}=\bigcup _{k=1}^{K}{\mathcal{C}_{k}}$ where ${\mathcal{C}_{k}}=\left\{ \mathbf{p}\left( {{u}_{k}} \right),u_{LoS}^{lower}+\left( k-1 \right)\Delta u\le {{u}_{k}}\le u_{LoS}^{lower}+k\Delta u \right\}$. The local directions $u_{LoS}^{lower}\le {{u}_{1}}\le u_{LoS}^{lower}+\Delta u\le {{u}_{2}}\le ...\le {{u}_{K}}\le u_{LoS}^{upper}=u_{LoS}^{lower}+K\Delta u$ define a nonoverlapping partition of the local LoS angular range $\left[ u_{LoS}^{lower},u_{LoS}^{upper} \right]$. Consequently, the received echo signal at the BS can be expressed as
\begin{align}
\mathbf{e}\left( t \right)&=\int_{{\mathcal{C}_{LoS}}}{{{\mathbf{e}}_\mathbf{r}}\left( t \right)}\text{d}\mathbf{r}=\sum\limits_{k=1}^{K}{{\mathbf{e}}_{k}}\left( t \right) \nonumber \\
&=g\sum\limits_{k=1}^{K}{\sqrt{{l}_{k}}{{\alpha }_{k}}\mathbf{b}\left( {{\phi }_{k}} \right){{\mathbf{a}}^{H}}\left( {{\phi }_{k}} \right)\mathbf{x}\left( t-\frac{2{{d}_{k}}}{c} \right)},
\label{eq_l8}
\end{align}
where ${{\mathbf{e}}_{\mathbf{r}}}$ and ${{\mathbf{e}}_{{k}}}$ refer to the echo signals as a function of $\mathbf{r}$ and $\mathcal{C}_k$, ${{\alpha }_{k}}\sim \mathcal{CN}\left(0,1 \right)$, ${{\phi }_{k}}$, ${{d}_{k}}$ and $l_k$ refer to the complex radar cross section (RCS), global direction, range, and perimeter of $\mathcal{C}_k$, respectively, $g=1/d_{o}^{2}$ is the sensing path loss coefficient, $\mathbf{a}(\cdot)$ and $\mathbf{b}(\cdot)$ are the steering vectors of transmit and receive antennas, respectively.

After mixing with AWGN in the wireless channel, the received sensing signal at the BS is
\begin{equation}
\label{eq_l9}
{{\mathbf{y}}_{s}}\left( t \right)=\mathbf{e}\left( t \right)+{{\mathbf{z}}_{s}}\left( t \right),
\end{equation}
where ${{\mathbf{z}}_{s}}\left( t \right)\in {\mathbb{C}^{{{N}_{r}}}}$ is the AWGN vector with zero-mean and $\sigma _{s}^{2}$ variance for each element.

\subsection{CRB for Extended Target}
We define $\boldsymbol{\xi }={{\left[ {{\boldsymbol{\alpha }}^{T}}\text{ } g \text{ } {{\boldsymbol{\kappa}}^{T}} \right]}^{T}}$ as the unknown parameter set including three kinds of parameters. $\boldsymbol{\alpha }=\left[ \mathcal{R}\left( {{\alpha }_{1}},{{\alpha }_{2}},...,{{\alpha }_{K}} \right),\mathcal{I}\left( {{\alpha }_{1}},{{\alpha }_{2}},...,{{\alpha }_{K}} \right) \right]^{T}\in {\mathbb{R}^{2K}}$ is a nuisance RCS vector parameter which has random values for every observation, $g$ is the path loss coefficient as defined before, and $\boldsymbol{\kappa }={{\left[ {{d}_{o}}\text{ }{{\phi }_{o}}\text{ }\varphi \text{ }\mathbf{m}^T \text{ }\mathbf{n}^T \right]}^{T}}\in {\mathbb{R}^{\left( 2Q+3 \right)}}$ is the deterministic vector parameter of interest. As stated in Section III-B, the received echo signals are independent complex Gaussian vectors with ${{\mathbf{y}}_{s}}\left( t \right)\sim \mathcal{CN}\left( \mathbf{e}\left( t \right),\sigma _{s}^{2}{{\mathbf{I}}_{{{N}_{r}}}} \right)$. Within a certain observation period $t_s$, the log-likelihood function for estimating $\boldsymbol{\xi }$ from ${{\mathbf{y}}_{s}}\left( t \right)$ is presented as
\begin{align}
\log p\left( {{\mathbf{y}}_{s}}|\boldsymbol{\xi } \right)= &\frac{2}{{\sigma_s^2}}\mathcal{R} \int_{t_s}{\mathbf{y}_{s}^{H}\mathbf{e}\left( t \right)\text{d}t}-\frac{1}{{\sigma_s^2}}\int_{t_s}{{{\left\| \mathbf{e}\left( t \right) \right\|}^{2}}\text{d}t} \nonumber \\ 
& - \frac{1}{{\sigma_s^2}}\int_{t_s}{{{\left\| \mathbf{y}\left( t \right) \right\|}^{2}}\text{d}t} -t_{s}N_{r}\text{log}\left(\pi\sigma_s^{2}\right).
\label{eq_l10}
\end{align}

According to the definition of Fisher information matrix (FIM) \cite{ref6}, we obtain the FIM of all parameters as
\begin{align}
\mathbf{J}\left( \boldsymbol{\xi } \right)&=-\mathbb{E} \left[ \Delta _{\boldsymbol{\xi }}^{\boldsymbol{\xi }}\log p\left( {{\mathbf{y}}_{s}},\boldsymbol{\alpha |}g,\boldsymbol{\kappa } \right) \right] \nonumber \\ 
 & =-\mathbb{E} \left[ \Delta _{\boldsymbol{\xi }}^{\boldsymbol{\xi }}\log p\left( {{\mathbf{y}}_{s}}\boldsymbol{|\alpha },g,\boldsymbol{\kappa } \right) \right]-\mathbb{E} \left[ \Delta _{\boldsymbol{\xi }}^{\boldsymbol{\xi }}\log p\left( \boldsymbol{\alpha |}g,\boldsymbol{\kappa } \right) \right]\nonumber \\ 
 & =-\mathbb{E} \left[ \Delta _{\boldsymbol{\xi }}^{\boldsymbol{\xi }}\log p\left( {{\mathbf{y}}_{s}}\mathbf{|\xi } \right) \right]-\mathbb{E} \left[ \Delta _{\boldsymbol{\xi }}^{\boldsymbol{\xi }}\log p\left( \boldsymbol{\alpha } \right) \right],
 \label{eq_l11}
\end{align}
where $p\left( {{\mathbf{y}}_{s}},\boldsymbol{\alpha |}g,\boldsymbol{\kappa } \right)$ is the joint a posteriori probability density function of the echo signals. Note that the second equation in (\ref{eq_l11}) holds as a direct application of the Bayes theorem, while the third equation explores the fact that RCS $\boldsymbol{\alpha }$ is a random vector independent of $g$ and $\boldsymbol{\kappa }$.

Since the received signals $\mathbf{y}_s(t)$ obey Gaussian distribution, we can present the expected second derivative of the log-likelihood function as
\begin{equation}
\label{eq_l12}
-\mathbb{E} \left[ \Delta _{{{\boldsymbol{\theta} }_{1}}}^{{\boldsymbol{\theta }_{2}}}\log p\left( {{\mathbf{y}}_{s}}\mathbf{|\xi } \right) \right]=\frac{2}{\sigma _{s}^{2}} \mathcal{R}\int_{{{t}_{s}}}{\mathbb{E}\left[ \frac{\partial {{\mathbf{e}}^{H}}}{\partial {\boldsymbol{\theta }_{1}}}\frac{\partial \mathbf{e}}{\partial \boldsymbol{\theta}_{2}^{T}} \right]\text{d}t},
\end{equation}
where ${{\boldsymbol{\theta} }_{1}}$ and ${{\boldsymbol{\theta} }_{2}}$ are arbitrary variables. Once again, considering the random characteristic of RCS, we derive that $\mathbb{E}\left[ \Delta _{g}^{\boldsymbol{\alpha }}\log p\left( \boldsymbol{\alpha } \right) \right]={{\mathbf{0}}_{2K\times 1}}$, $\mathbb{E}\left[ \Delta _{\boldsymbol{\kappa }}^{\boldsymbol{\alpha }}\log p\left( \boldsymbol{\alpha } \right) \right]={{\mathbf{0}}_{2K\times (2Q+3)}}$. Further, since $\mathbb{E}\left[ {{{\alpha }}_{k}} \right]=0$, we have $\mathbb{E}\left[ \Delta _{g}^{\boldsymbol{\alpha }}\log p\left( {{\mathbf{y}}_{s}}|\boldsymbol{\xi } \right) \right]={{\mathbf{0}}_{2K\times 1}}$ and $\mathbb{E}\left[ \Delta _{\boldsymbol{\kappa }}^{\boldsymbol{\alpha }}\log p\left( {{\mathbf{y}}_{s}}|\boldsymbol{\xi } \right) \right]={{\mathbf{0}}_{2K\times (2Q+3)}}$.

Combining the above properties, the FIM can be split as
\begin{equation}
\label{eq_l13}
\mathbf{J}\left( \boldsymbol{\xi } \right)=\left[ \begin{matrix}
   {{\mathbf{I}}_{\boldsymbol{\alpha }}} & {{\mathbf{0}}_{2K\times \left( 2Q+4 \right)}}  \\
   {{\mathbf{0}}_{\left( 2Q+4 \right)\times 2K}} & {{\mathbf{I}}_{\left( \boldsymbol{\kappa },g \right)}}  \\
\end{matrix} \right],
\end{equation}
\begin{equation}
\label{eq_l14}
{{\mathbf{I}}_{\left( \boldsymbol{\kappa },g \right)}}=\left[ \begin{matrix}
   {{i}_{g}} & \mathbf{i}_{\boldsymbol{\kappa },g}^{T}  \\
   {{\mathbf{i}}_{\boldsymbol{\kappa },g}} & {{\mathbf{I}}_{\boldsymbol{\kappa }}}  \\
\end{matrix} \right],
\end{equation}
where ${{\mathbf{I}}_{\boldsymbol{\alpha }}}\in {\mathbb{R}^{2K\times 2K}}$, ${{\mathbf{I}}_{\boldsymbol{\kappa }}}\in {\mathbb{R}^{\left( 2Q+3 \right)\times \left( 2Q+3 \right)}}$ and ${{\mathbf{I}}_{\left( \boldsymbol{\kappa },g \right)}}\in {\mathbb{R}^{\left( 2Q+4 \right)\times \left( 2Q+4 \right)}}$ are FIMs of the RCS parameter $\boldsymbol{\alpha }$, the deterministic vector $\boldsymbol{\kappa }$ and the combination of $\left( \boldsymbol{\kappa },g \right)$, respectively. ${{i}_{{g}}}$ is the Fisher information scalar related to the path loss $g$, ${{\mathbf{i}}_{\boldsymbol{\kappa },g}}\in {\mathbb{R}^{2Q+3}}$ refers to the Fisher vector regarding the path loss $g$ and parameters of interest $\boldsymbol{\kappa }$.

It should be noted that while there are a total of $2Q+2K+4$ unknown parameters in $\boldsymbol{\xi }$, we are only interested in the specific estimation of $2Q+3$ parameters in $\boldsymbol{\kappa }$. Thus, the effective Fisher information matrix (EFIM) should be extracted from the overall FIM for further analysis. We commence by ignoring the FIM elements related to $\boldsymbol{\alpha }$ given its irrelevance of $\boldsymbol{\kappa }$ and $g$, and only focus on the lower-right matrix block ${{\mathbf{I}}_{\left( \boldsymbol{\kappa },g \right)}}$. Then, we invert ${{\mathbf{I}}_{\boldsymbol{\kappa }}}$ as the Schur’s complement of ${{i}_{g}}$ over ${{\mathbf{I}}_{\boldsymbol{\kappa },g}}$ and calculate the EFIM as
\begin{equation}
\label{eq_l15}
\mathbf{J}\left( \boldsymbol{\kappa } \right)={{\mathbf{I}}_{\boldsymbol{\kappa }}}-\frac{1}{{{i}_{g}}}{{\mathbf{i}}_{\boldsymbol{\kappa },g}}\mathbf{i}_{\boldsymbol{\kappa },g}^{T}.
\end{equation}

\begin{figure*}[t]
\begin{equation}
\label{eq_l16}
CRB\left( {{\phi }_{o}} \right)={{\left( \frac{2{{g}^{2}}{{N}_{r}t_s}}{\sigma _{s}^{2}} \right)}^{-1}}{{\left\{ \sum\limits_{k=1}^{K}{{{l}_{k}}\left[ {{Z}_{1}}\mathbf{a}_{k}^{H}{{\mathbf{R}}_{x}}{{\mathbf{a}}_{k}}+\mathbf{\dot{a}}_{k}^{H}{{\mathbf{R}}_{x}}{{{\mathbf{\dot{a}}}}_{k}}-\frac{{{\left( \mathbf{\dot{a}}_{k}^{H}{{\mathbf{R}}_{x}}{{\mathbf{a}}_{k}}+\mathbf{a}_{k}^{H}{{\mathbf{R}}_{x}}{{{\mathbf{\dot{a}}}}_{k}} \right)}^{2}}}{4\mathbf{a}_{k}^{H}{{\mathbf{R}}_{x}}{{\mathbf{a}}_{k}}} \right]} \right\}}^{-1}}.\\
\end{equation}
\end{figure*}

With prior information of the RCS parameter $\boldsymbol{\alpha }$, the simplified EFIM $\mathbf{J}\left( \boldsymbol{\kappa } \right)$ can be extracted from the original FIM $\mathbf{J}\left( \boldsymbol{\xi } \right)$. Nevertheless, we still need to calculate the matrix inverse of EFIM to obtain the final CRB matrix on $\boldsymbol{\kappa }$. It should be noted that EFIM $\mathbf{J}\left( \boldsymbol{\kappa } \right)$ has a dimension of $\left( 2Q+3 \right)\times \left( 2Q+3 \right)$, where the number of TFS parameters $Q$ is generally greater than eight for an acceptable representation of the contour. Hence, considering the matrix inversion operation, it is hard to obtain a closed-form CRB on a specific parameter directly from $\mathbf{J}\left( \boldsymbol{\kappa } \right)$, say the target direction, not to mentioned constructing a solvable problem aimed at optimizing the derived CRB. Consequently, we further define $\boldsymbol{\kappa =}\left[ {{\boldsymbol{\kappa }}_{1}^T},{{\boldsymbol{\kappa }}_{2}^T} \right]^T\text{, }{{\boldsymbol{\kappa }}_{1}}=\left[ {{d}_{o}},{{\phi }_{o}},\varphi  \right]^T\text{, }{{\boldsymbol{\kappa }}_{2}}=\left[ \mathbf{m}^T,\mathbf{n}^T \right]^T$, and make the following assumptions
\begin{equation}
{{\mathbf{I}}_{\boldsymbol{\kappa }}}=\left[ \begin{matrix}
   {{\mathbf{I}}_{{{\boldsymbol{\kappa }}_{1}}}} & \mathbf{I}_{{{\boldsymbol{\kappa }}_{2}},{{\boldsymbol{\kappa }}_{1}}}^{T}  \\
   {{\mathbf{I}}_{{{\boldsymbol{\kappa }}_{2}},{{\boldsymbol{\kappa }}_{1}}}} & {{\mathbf{I}}_{{{\boldsymbol{\kappa }}_{2}}}}  \\
\end{matrix} \right],
\end{equation}
\begin{equation}
{{\mathbf{i}}_{\boldsymbol{\kappa },g}}=\left[ \begin{matrix}
   {{\mathbf{i}}_{{{\boldsymbol{\kappa }}_{1}},g}}  \\
   {{\mathbf{i}}_{{{\boldsymbol{\kappa }}_{2}},g}}  \\
\end{matrix} \right],
\end{equation}
\begin{equation}
\mathbf{J}\left( \boldsymbol{\kappa } \right)=\left[ \begin{matrix}
   \mathbf{J}\left( {{\boldsymbol{\kappa }}_{1}} \right) & \mathbf{J}{{\left( {{\boldsymbol{\kappa }}_{2}},{{\boldsymbol{\kappa }}_{1}} \right)}^{T}}  \\
   \mathbf{J}\left( {{\boldsymbol{\kappa }}_{2}},{{\boldsymbol{\kappa }}_{1}} \right) & \mathbf{J}\left( {{\boldsymbol{\kappa }}_{2}} \right)  \\
\end{matrix} \right],
\end{equation}

\begin{subequations}
\begin{align}
& \mathbf{CR}{{\mathbf{B}}_{\boldsymbol{\kappa }}}=\mathbf{J}{{\left( \boldsymbol{\kappa } \right)}^{-1}}={{\left( {{\mathbf{I}}_{\boldsymbol{\kappa }}}-\frac{1}{{{i}_{g}}}{{\mathbf{i}}_{\boldsymbol{\kappa },g}}\mathbf{i}_{\boldsymbol{\kappa },g}^{T} \right)}^{-1}}, \\
\label{eq_l31}
& \mathbf{CR}{{\mathbf{B}}_{{{\boldsymbol{\kappa }}_{1}}}}=\mathbf{J}{{\left( {{\boldsymbol{\kappa }}_{1}} \right)}^{-1}}={{\left( {{\mathbf{I}}_{{{\boldsymbol{\kappa }}_{1}}}}-\frac{1}{{{i}_{g}}}{{\mathbf{i}}_{{{\boldsymbol{\kappa }}_{1}},g}}\mathbf{i}_{{{\boldsymbol{\kappa }}_{1}},g}^{T} \right)}^{-1}},
\end{align}
\end{subequations}
where the modified EFIM $\mathbf{J}\left( {{\boldsymbol{\kappa }}_{1}} \right)$ and $\mathbf{J}\left( {{\boldsymbol{\kappa }}_{2}} \right)$ compose the upper-left and lower-right matrix blocks of the original EFIM $\mathbf{J}\left( \boldsymbol{\kappa } \right)$. The elements of ${\mathbf{I}}_{\boldsymbol{\kappa }}$ and ${{\mathbf{i}}_{\boldsymbol{\kappa },g}}$, i.e. ${{\mathbf{I}}_{{{\boldsymbol{\kappa }}_{1}}}}$, ${{\mathbf{I}}_{{{\boldsymbol{\kappa }}_{1}},{{\boldsymbol{\kappa }}_{2}}}}$ and ${{\mathbf{i}}_{{{\boldsymbol{\kappa }}_{1}},g}}$, share similar definitions with elements in $(\ref{eq_l14})$. While $\mathbf{J}\left( {{\boldsymbol{\kappa }}_{1}} \right)$ is a shrunken version of $\mathbf{J}\left( \boldsymbol{\kappa } \right)$ with a size of $3 \times 3$, it provides a robust and closed-form CRB approximation for $\mathbf{J}\left( \boldsymbol{\kappa } \right)$ regarding the shared parameters ${{\boldsymbol{\kappa }}_{1}}$ in these two EFIMs, i.e. $CR{{B}_{{{\boldsymbol{\kappa }}_{1}}}}\left( {{\phi }_{o}} \right)\approx CR{{B}_{\boldsymbol{\kappa }}}\left( {{\phi }_{o}} \right)$.

From another aspect, the contour parameter ${{\boldsymbol{\kappa }}_{2}}$ is actually invariant for one specific extended target. We can exclude ${{\boldsymbol{\kappa }}_{2}}$ from $\boldsymbol{\kappa }$ to be estimated when the estimating accuracy is acceptable after several observations, or when ${{\boldsymbol{\kappa }}_{2}}$ is already known a prior to the BS. As such, the whole EFIM $\mathbf{J}\left( \boldsymbol{\kappa } \right)$ naturally degrades to the modified EFIM $\mathbf{J}\left( {{\boldsymbol{\kappa }}_{1}} \right)$. In the following paragraphs, we shall use $CRB$ to replace the expression of $CR{{B}_{{{\boldsymbol{\kappa }}_{1}}}}$ for simplicity. Finally, with pre-knowledge of the RCS distribution and the exact contour of the target, we have the following theorem:

\begin{theorem}
The CRB on ${{\phi}_{o}}$ can be expressed as $(\ref{eq_l16})$, ${{Z}_{1}}={{\pi }^{2}}\left( N_{r}^{2}-1 \right){{\cos }^{2}}\varphi /12$, ${{Z}_{2}}={{\left( 4\pi B/c \right)}^{2}}$, ${{X}_{k}}={{\left[ \boldsymbol{\sigma} _{k}^{T}\mathbf{m}\sin \left( {{\phi }_{o}}-\varphi  \right)-\boldsymbol{\varsigma} _{k}^{T}\mathbf{n}\cos \left( {{\phi }_{o}}-\varphi  \right) \right]}^{2}}$, ${{\mathbf{a}}_{k}}$ is the abbreviation for $\mathbf{a}\left( {{\phi }_{k}} \right)$, $B$ is the effective bandwidth of the ISAC system, ${{\boldsymbol{\sigma} }_{k}}={{\left[ \cos \left( {{u}_{k}} \right),...,\cos \left( Q{{u}_{k}} \right) \right]}^{T}}$ and ${{\boldsymbol{\varsigma} }_{k}}={{\left[ \sin \left( {{u}_{k}} \right),...,\sin \left( Q{{u}_{k}} \right) \right]}^{T}}$ are the cosine and sine harmonics.
\end{theorem}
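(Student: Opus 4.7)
I would extract the $(2,2)$ element of $\mathbf{J}(\boldsymbol{\kappa}_1)^{-1}$ by (i) computing the partials $\partial\mathbf{e}/\partial\theta$ for $\theta\in\{g,d_o,\phi_o,\varphi\}$ from the echo model (\ref{eq_l8}), (ii) assembling the $4\times 4$ block $\mathbf{I}_{(\boldsymbol{\kappa}_1,g)}$ via the Gaussian-noise Fisher formula (\ref{eq_l12}), (iii) performing the Schur reduction on $g$ as in (\ref{eq_l31}) to obtain the $3\times 3$ matrix $\mathbf{J}(\boldsymbol{\kappa}_1)$, and (iv) reading off $[\mathbf{J}(\boldsymbol{\kappa}_1)^{-1}]_{22}$.

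\textbf{Partial derivatives and FIM assembly.} The $g$-derivative equals $\mathbf{e}/g$; the remaining derivatives flow through $\mathbf{b}(\phi_k)$, $\mathbf{a}(\phi_k)$, and the time-delayed waveform $\mathbf{x}(t-2d_k/c)$, with the chain rule applied to the geometry (\ref{eq_l5})--(\ref{eq_l7}). A far-field expansion ($d_o$ much larger than the target extent) gives $\partial\phi_k/\partial\phi_o\to 1$ and $\partial d_k/\partial d_o\to 1$, while $\partial d_k/\partial\phi_o$ and the $\varphi$-derivatives of $\phi_k,d_k$ produce combinations of $\boldsymbol{\sigma}_k^T\mathbf{m}$ and $\boldsymbol{\varsigma}_k^T\mathbf{n}$ weighted by trigonometric factors in $(\phi_o-\varphi)$, whose squares are precisely $X_k$. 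Substituting into (\ref{eq_l12}) and exploiting the i.i.d.\ $\mathcal{CN}(0,1)$ nature of the $\alpha_k$ collapses the double sum $\sum_{k,k'}$ into $\sum_k l_k(\cdot)$. The standard ULA identities $\mathbf{b}_k^H\mathbf{b}_k=N_r$ and $\|\dot{\mathbf{b}}_k\|^2=\pi^2 N_r(N_r^2-1)\cos^2\phi_k/12$, together with the orientation-aligned far-field identification $\cos\phi_k\to\cos\varphi$, account for the factor $Z_1$; the time derivative of $\mathbf{x}$ through the delay yields the bandwidth factor $Z_2=(4\pi B/c)^2$ via $\int_{t_s}\mathbb{E}[\|\dot{\mathbf{x}}\|^2]\,dt\propto B^2 t_s\,\mathrm{tr}(\mathbf{R}_x)$.

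\textbf{Schur reduction over $g$.} Collecting terms, the key entries take the form $i_g=(2g^2 N_r t_s/\sigma_s^2)\sum_k l_k\mathbf{a}_k^H\mathbf{R}_x\mathbf{a}_k$, and $[\mathbf{i}_{\boldsymbol{\kappa}_1,g}]_{\phi_o}\propto \sum_k l_k(\dot{\mathbf{a}}_k^H\mathbf{R}_x\mathbf{a}_k+\mathbf{a}_k^H\mathbf{R}_x\dot{\mathbf{a}}_k)/2$. Their Schur quotient $[\mathbf{i}_{\phi_o,g}]^2/i_g$ is exactly the subtracted fraction $(\dot{\mathbf{a}}_k^H\mathbf{R}_x\mathbf{a}_k+\mathbf{a}_k^H\mathbf{R}_x\dot{\mathbf{a}}_k)^2/(4\mathbf{a}_k^H\mathbf{R}_x\mathbf{a}_k)$ appearing in (\ref{eq_l16}), while the two additive pieces $Z_1\mathbf{a}_k^H\mathbf{R}_x\mathbf{a}_k+\dot{\mathbf{a}}_k^H\mathbf{R}_x\dot{\mathbf{a}}_k$ come directly from the $(\phi_o,\phi_o)$ entry of $\mathbf{I}_{\boldsymbol{\kappa}_1}$. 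The overall $2g^2 N_r t_s/\sigma_s^2$ prefactor tracks the constants of (\ref{eq_l8}) and (\ref{eq_l12}).

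\textbf{Main obstacle.} Formula (\ref{eq_l16}) implicitly displays $[\mathbf{J}(\boldsymbol{\kappa}_1)^{-1}]_{22}=1/[\mathbf{J}(\boldsymbol{\kappa}_1)]_{22}$, so a rigorous proof must argue that the cross-couplings of $\phi_o$ with $d_o$ and $\varphi$ in $\mathbf{J}(\boldsymbol{\kappa}_1)$ are either zero or negligible. This, I expect, is where the quantities $Z_2$ and $X_k$ (defined in the theorem statement but absent from (\ref{eq_l16})) earn their keep: they populate the $d_o$ and $\varphi$ diagonals of $\mathbf{I}_{\boldsymbol{\kappa}_1}$---through the delay derivative and the rotational geometry, respectively---so their dominance over the off-diagonal entries justifies the approximate decoupling of the $\phi_o$ score. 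Verifying this decoupling cleanly, and making the far-field identification $\cos\phi_k\to\cos\varphi$ precise, is the step I anticipate to be the most delicate; the remaining manipulations are bookkeeping of integrals, ULA identities, and the block-matrix inverse.
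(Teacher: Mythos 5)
Your proposal follows the paper's proof essentially step for step: the paper likewise differentiates the echo through the intermediate variables $\boldsymbol{\Theta}_k=[d_k,\phi_k]^T$ via the chain rule, collapses the double sum using $\mathbb{E}[\alpha_{k_1}^*\alpha_{k_2}]=\delta_{k_1k_2}$, invokes the ULA identities $\|\mathbf{b}_k\|^2=N_r$ and $\|\dot{\mathbf{b}}_k\|^2=N_rZ_1$ together with the effective-bandwidth definition for $Z_2$, and performs the Schur reduction over $g$ to obtain exactly the prefactor, the additive terms, and the subtracted fraction you describe. The ``main obstacle'' you flag is handled in the paper by the far-field approximations $\partial\phi_k/\partial\boldsymbol{\kappa}_1\approx[0,1,0]^T$ and $\partial d_k/\partial\boldsymbol{\kappa}_1\approx[1,\ \mathbf{r}_k^T\mathbf{V}^T\mathbf{p}_\perp/d_o,\ -\mathbf{r}_k^T\mathbf{V}^T\mathbf{p}_\perp/d_o]^T$ (whose second and third components generate $X_k$), after which the $Z_2X_k$ contributions to the $\phi_o$ row are dropped as negligible---precisely the decoupling you anticipated.
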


\begin{proof}
\textit{Proof:} Please see Appendix I.{$\hfill\blacksquare$}
\end{proof}

\section{Joint Beamforming Design}
In this section, we aim to minimize the derived EST direction CRB under some practical constraints. First, we need to ensure the total power constraint at the BS and to guarantee the minimum SINR requirement for all CUs. Namely, the first two constraints are:

\begin{equation}
\label{eq_l17}
C1:\mathrm{tr}\left( {{\mathbf{R}}_{x}} \right)\le {{P}_{t}},
\end{equation}

\begin{equation}
\label{eq_l19}
C2:\left( 1+{{\Gamma }^{-1}} \right)\mathbf{h}_{c}^{H}{{\mathbf{R}}_{c}}{{\mathbf{h}}_{c}}\ge \mathbf{h}_{c}^{H}{{\mathbf{R}}_{x}}{{\mathbf{h}}_{c}}+\sigma _{c}^{2},
\end{equation}
where $\Gamma$ is the SINR threshold for all CUs, $\mathbf{R}_{c}=\mathbf{w}_{c}\mathbf{w}_{c}^{H}$ is the covariance matrix for the $c$-th CU.

Second, for EST, we need to ensure that every element on the LoS contour receives sufficient energy and reflects back valid echo signals, which is vital for further parameter estimation. Thus, we define a 3-dB beam coverage constraint $C3$, where the maximum received energy of each LoS contour element should cater to the corresponding minimum value. Note that in $(\ref{eq_l8})$, we discretize the LoS contour curve into $K$ subsections. Leveraging the max-min concept, $C3$ can be expressed in a discrete form as
\begin{align}
\label{eq_l21}
C3:2\min_{1 \leq k \leq K} \left( \mathbf{a}_{k}^{H}{\mathbf{R}}_{x}\mathbf{a}_{k} \right) - \max_{1 \leq k \leq K} \left( \mathbf{a}_{k}^{H}{\mathbf{R}}_{x}\mathbf{a}_{k} \right) \ge 0,
\end{align}
where the beampattern term $\mathbf{a}_{k}^{H}{{\mathbf{R}}_{x}}{{\mathbf{a}}_{k}}$ is utilized to describe the energy received by elements of the $k$-th LoS contour subsection. Based on above analysis, the ISAC beamforming optimization problem can be formulated as follows
\begin{align}
&\underset{\left\{ {{\mathbf{w}}_{c}} \right\}_{c=1}^{C}}{\mathop{\min }}\,CRB\left( {{\phi }_{o}} \right) \nonumber \\
&s.t.\text{ }C1,\text{ }C2,\text{ }C3.
\label{eq_l22}
\end{align}

Recalling the CRB formula in $(\ref{eq_l16})$, it is clear that the objective function $CRB\left( {{\phi }_{o}} \right)$ is non-convex owing to its fractional structure. To transform $CRB\left( {{\phi }_{o}} \right)$ into a convex expression, we utilize the Schur complement condition \cite{ref4} and introduce $K$ extra variables to rewrite the objective function as follows

\vspace{-\baselineskip} 
\begin{align}
& \underset{\left\{ {{\mathbf{w}}_{c}} \right\}_{c=1}^C,\left\{ {{t}_{k}} \right\}_{k=1}^{K}}{\mathop{\min }}\,-\sum\limits_{k=1}^{K}{{{l}_{k}}{{t}_{k}}} \nonumber \\ 
s.t.& {\mathbf{P}_{k}}= \left[ \begin{matrix}
   \frac{{{\Vert {{{\mathbf{\dot{a}}}}_{k}} \Vert}^{2}}}{{{\Vert {{\mathbf{b}}_{k}} \Vert}^{2}}}\mathbf{a}_{k}^{H}{{\mathbf{R}}_{x}}{{\mathbf{a}}_{k}}+\mathbf{\dot{a}}_{k}^{H}{{\mathbf{R}}_{x}}{{{\mathbf{\dot{a}}}}_{k}}-{{t}_{k}} & \mathcal{R}\left(\mathbf{\dot{a}}_{k}^{H}{{\mathbf{R}}_{x}}{{\mathbf{a}}_{k}}\right)  \\
   \mathcal{R}\left(\mathbf{\dot{a}}_{k}^{H}{{\mathbf{R}}_{x}}{{\mathbf{a}}_{k}}\right) & \mathbf{a}_{k}^{H}{{\mathbf{R}}_{x}}{{\mathbf{a}}_{k}}  \\
\end{matrix} \right], \nonumber \\
&\mathbf{P}_k \ge 0, k=1,...,K, \nonumber \\
&{{\mathbf{R}}_{x}}=\sum\limits_{c=1}^{C}{{{\mathbf{R}}_{c}}},{{\mathbf{R}}_{c}}=\sum\limits_{c=1}^{C}\mathbf{w}_{c}\mathbf{w}_{c}^{H},c=1,...,C.
\label{eq_l23}
\end{align}
\vspace{-\baselineskip} 

Replacing $(\ref{eq_l16})$ with the equivalent expression $(\ref{eq_l23})$, we observe that the modified problem is still non-convex due to the quadratic terms in $C1$, $C2$ and $C3$ of $(\ref{eq_l22})$, i.e. ${{\mathbf{R}}_{c}}={{\mathbf{w}}_{c}}\mathbf{w}_{c}^{H}$. To formulate a convex problem, one common practice is to employ the classical semidefinite programming (SDP) technique, replacing the original variable ${{\mathbf{w}}_{c}}$ in $(\ref{eq_l22})-(\ref{eq_l23})$ by ${{\mathbf{R}}_{c}}={{\mathbf{w}}_{c}}\mathbf{w}_{c}^{H}$ with rank-one constraint $\text{rank}\left( {{\mathbf{R}}_{c}} \right)=1$. Omitting this constraint leads to the final SDR problem as follows

\vspace{-\baselineskip} 
\begin{align}
  &\underset{\left\{ {{\mathbf{R}}_{c}} \right\}_{c=1}^{C},\left\{ {{t}_{k}} \right\}_{k=1}^{K}}{\mathop{\min }}\,-\sum\limits_{k=1}^{K}{{{l}_{_{k}}}{{t}_{k}}} \nonumber \\ 
 s.t. & {\mathbf{P}_{k}}\ge 0,k=1,...,K, \nonumber \\ 
 & \left( 1+{{\Gamma }^{-1}} \right)\mathbf{h}_{c}^{H}{{\mathbf{R}}_{c}}{{\mathbf{h}}_{c}}\ge \mathbf{h}_{c}^{H}{{\mathbf{R}}_{x}}{{\mathbf{h}}_{c}}+\sigma _{c}^{2},c=1,...,C, \nonumber \\ 
 & \text{tr}\left( {{\mathbf{R}}_{x}} \right)\le {{P}_{t}}, \nonumber \\ 
 & 2\min_{1 \leq k \leq K} \left( \mathbf{a}_{k}^{H}{{\mathbf{R}}_{x}}{{\mathbf{a}}_{k}} \right)-\max_{1 \leq k \leq K} \left( \mathbf{a}_{k}^{H}{{\mathbf{R}}_{x}}{{\mathbf{a}}_{k}} \right)\ge 0,\nonumber \\ 
 & {{\mathbf{R}}_{x}}=\sum\limits_{c=1}^{C}{{{\mathbf{R}}_{c}}},{{\mathbf{R}}_{c}}\ge 0,c=1,...,C.
 \label{eq_l24}
\end{align}
\vspace{-\baselineskip}

It can be noted that the relaxed problem $(\ref{eq_l24})$ is a convex quadratic semidefinite
programming (QSDP) problem whose global optimum can be obtained by convex optimization toolboxes \cite{ref9}. Nevertheless, we can not guarantee $(\ref{eq_l24})$ to be a tight relaxation of the original SDP problem for all circumstances. In other words, the optimal solution $\left\{ {{\mathbf{R}}_{c}} \right\}_{c=1}^{C}$ for $(\ref{eq_l24})$ is not necessarily rank-one. Consequently, we present a rank-one solution extracting algorithm to obtain the valid beamformers, designed as follows

\begin{theorem}
If $\left\{ {{\mathbf{R}}_{c}} \right\}_{c=1}^{C}$ provide one optimal solution to $(\ref{eq_l24})$, then the following algorithm gives a solution $\left\{\tilde{\mathbf{w}_c}\right\}_{c=1}^C$ to $(\ref{eq_l24})$ without violating the corresponding power and SINR constraints $C1$ and $C2$

\vspace{-\baselineskip} 
\begin{equation}
\begin{aligned}
&\mathbf{\tilde{u}}_{c}\in \text{span}\left( \mathbf{W}_c \right), \quad c=1,...,C, \\
&\boldsymbol{\tilde{\eta}}=\left[\gamma_{1}\sigma _{c}^{2},...\gamma_{C}\sigma _{c}^{2}\right]^{T}, \\
&\left[\mathbf{\tilde{F}}\right]_{i,c}=\begin{cases}
   \left|\mathbf{h}_{i}^{H}\mathbf{\tilde{u}}_{i}\right|^{2}, & i=c, \\
   \Gamma \left|\mathbf{h}_{c}^{H}\mathbf{\tilde{u}}_{i}\right|^{2}, & i\ne c,
\end{cases} \\
&\mathbf{\tilde{q}}=\mathbf{\tilde{F}}^{-1}\boldsymbol{\tilde{\eta}}=\left[\mathbf{\tilde{q}}_{1},...,\mathbf{\tilde{q}}_{C}\right]^{T}, \\
&\mathbf{\tilde{w}}_{c}=\sqrt{\mathbf{\tilde{q}}}_{c}\mathbf{\tilde{u}}_{c}, \quad c=1,...,C.
\end{aligned}
\label{eq_l25}
\end{equation}
\end{theorem}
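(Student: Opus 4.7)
The plan is to verify that the rank-one construction in $(\ref{eq_l25})$ yields beamformers that simultaneously satisfy the transmit-power constraint $C1$ and the per-user SINR constraint $C2$, by reducing everything to the linear system for the power-allocation vector $\tilde{\mathbf{q}}$.

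First I would fix the direction vectors. Writing each SDR-optimal matrix in factored form $\mathbf{R}_c=\mathbf{W}_c\mathbf{W}_c^H$ (obtained, for instance, from an eigendecomposition of $\mathbf{R}_c$), the choice $\tilde{\mathbf{u}}_c\in\text{span}(\mathbf{W}_c)$ is always feasible, and any unit-norm such vector captures the principal beamforming direction of user $c$. The specific choice (principal eigenvector of $\mathbf{R}_c$, or the MMSE direction $\mathbf{R}_c\mathbf{h}_c/\|\mathbf{R}_c\mathbf{h}_c\|$) affects the sharpness of the resulting power bound but not the overall structure of the argument.

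Second I would substitute $\tilde{\mathbf{w}}_c=\sqrt{\tilde{q}_c}\,\tilde{\mathbf{u}}_c$ into $C2$ and force it to hold with equality. Collecting terms produces exactly the linear system $\tilde{\mathbf{F}}\tilde{\mathbf{q}}=\tilde{\boldsymbol{\eta}}$ from $(\ref{eq_l25})$, so by construction $C2$ is met with equality for every user whenever $\tilde{\mathbf{q}}$ is well defined and componentwise nonnegative. To establish this, I would observe that $\tilde{\mathbf{F}}$ has the sign pattern of a $\mathbf{Z}$-matrix (positive diagonal, nonnegative off-diagonal in absolute value) whose diagonal dominates whenever the SDR problem $(\ref{eq_l24})$ is SINR-feasible; invoking the Perron--Frobenius theorem, or equivalently the classical Schubert--Boche fixed-point argument for downlink SINR balancing, yields $\tilde{\mathbf{q}}\ge\mathbf{0}$ and the invertibility of $\tilde{\mathbf{F}}$.

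Third I would verify $C1$. The chain of inequalities I am aiming at is
\begin{equation*}
\sum_{c=1}^{C}\tilde{q}_c\|\tilde{\mathbf{u}}_c\|^2 \le \sum_{c=1}^{C}\mathrm{tr}(\mathbf{R}_c) \le P_t,
\end{equation*}
where the second bound is $C1$ applied to the SDR optimum. The first inequality expresses that the rank-one reduction along $\tilde{\mathbf{u}}_c$ uses no more total transmit power than the original higher-rank solution while still meeting the same SINR targets. The main obstacle will be this power-monotonicity step: since the reduction can shrink the numerator $\mathbf{h}_c^H\mathbf{R}_c\mathbf{h}_c$ of user $c$'s SINR while perturbing the interference in either direction, $\tilde{q}_c$ is not trivially bounded by $\mathrm{tr}(\mathbf{R}_c)$ termwise. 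Making the argument work cleanly typically requires either a judicious choice of $\tilde{\mathbf{u}}_c$ that maximizes $|\mathbf{h}_c^H\mathbf{u}|^2/\|\mathbf{u}\|^2$ over $\text{range}(\mathbf{R}_c)$, or a dual-variable argument leveraging the KKT conditions of $(\ref{eq_l24})$ to relate $\tilde{\mathbf{q}}$ directly back to the trace of the optimal $\{\mathbf{R}_c\}$.
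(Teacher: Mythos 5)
The paper does not actually supply a proof of this theorem---it defers entirely to Chapter 18 of the Godara handbook---so there is no in-paper argument to compare against step by step; your plan is the standard randomization-plus-power-rescaling argument from the SDR downlink beamforming literature. Your first two steps are sound: substituting $\tilde{\mathbf{w}}_c=\sqrt{\tilde q_c}\,\tilde{\mathbf{u}}_c$ into $C2$ and forcing equality yields the linear system $\tilde{\mathbf{F}}\tilde{\mathbf{q}}=\tilde{\boldsymbol{\eta}}$ of $(\ref{eq_l25})$, and existence and nonnegativity of $\tilde{\mathbf{q}}$ do follow from an M-matrix/Perron--Frobenius argument once the system is written in the form $(\mathbf{D}-\Gamma\mathbf{G})\mathbf{q}=\boldsymbol{\eta}$ with negative off-diagonal entries and feasible SINR targets (the all-positive $\tilde{\mathbf{F}}$ displayed in $(\ref{eq_l25})$ obscures this sign structure, and your description of the sign pattern is likewise garbled, but the idea is right).

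The genuine gap is exactly where you locate it, and you do not close it: the power step. The inequality $\sum_{c}\tilde q_c\|\tilde{\mathbf{u}}_c\|^2\le\sum_{c}\mathrm{tr}(\mathbf{R}_c)\le P_t$ is simply false for an \emph{arbitrary} $\tilde{\mathbf{u}}_c\in\mathrm{span}(\mathbf{W}_c)$: when $\mathbf{R}_c$ has rank greater than one, the span contains directions nearly orthogonal to $\mathbf{h}_c$, for which $\tilde q_c$ must grow without bound to meet the SINR target and the total power exceeds $P_t$. This is precisely why the Gaussian-randomization literature settles for approximation ratios rather than exact power preservation, and why the paper itself runs $N_e$ random epochs and discards infeasible ones. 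You name two possible repairs (a judicious choice of $\tilde{\mathbf{u}}_c$, or a KKT/dual argument in the style of Bengtsson--Ottersten) but carry out neither, so $C1$ is not established. A concrete route would be to fix $\tilde{\mathbf{u}}_c=\mathbf{R}_c\mathbf{h}_c/\|\mathbf{R}_c\mathbf{h}_c\|$, use $|\mathbf{h}_c^H\tilde{\mathbf{u}}_c|^2\ge\mathbf{h}_c^H\mathbf{R}_c\mathbf{h}_c/\mathrm{tr}(\mathbf{R}_c)$ (Cauchy--Schwarz) to bound the useful-signal loss, and then compare the interference terms before and after the reduction; without some such argument, the claim as stated---that any vector in the span works---is stronger than what your proposal, or indeed the cited reference, delivers.
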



\begin{proof}
\textit{Proof:} see Chapter 18 in \cite{ref10}.{$\hfill\blacksquare$}
\end{proof}

We note that the solution $\{{\mathbf{\tilde{w}}}_{c}\}_{c=1}^C$ from Theorem 2 is not necessarily guaranteed to satisfy the beam coverage constraint $C3$ in $(\ref{eq_l24})$. Consequently, ${{N}_{e}}$ random extraction epoches are required in $(\ref{eq_l25})$ to obtain a robust solution. Two potential circumstances could occur after the randomization process.

\begin{itemize}
    \item If there exists one or multiple feasible solutions in ${{N}_{e}}$ epochs, we select the epoch contributing to the minimum CRB as the final solution
    
    \vspace{-\baselineskip} 
    \vspace{2pt} 
    \begin{equation}
    \label{eq_l26}
    \left\{ \mathbf{\tilde{w}}_{c}^{opt} \right\}_{c=1}^{C}={\mathop{\min_{1\leq 
e \leq {N}_{e}} }}\,CRB\left( \left\{ {{{\mathbf{\tilde{w}}}}_{c,e}} \right\}_{c=1}^{C} \right),
    \end{equation}
    \vspace{-\baselineskip} 
    
    where ${{\mathbf{\tilde{w}}}_{c,e}}$ refers to the extracted beamformer of the $c$-th CU in the $e$-th epoch.
    
    \item If there exists no feasible solution in all epochs, then the following cost function is applied to ${{N}_{e}}$ solutions to obtain a final beamformer with maximal value

    \vspace{-\baselineskip} 
    \vspace{3pt} 
    \begin{align}
   \mathcal{L}\left( \left\{ {{{\mathbf{\tilde{w}}}}_{c,e}} \right\}_{c=1}^{C} \right)&=2\min_{1\leq 
    k \leq K}  \left( \mathbf{a}_{k}^{H}\sum\limits_{c=1}^{C}{{{{\mathbf{\tilde{w}}}}_{c,e}}\mathbf{\tilde{w}}_{c,e}^{H}}{{\mathbf{a}}_{k}} \right) \nonumber \\ 
    \label{eq_l27}
     & -\max_{1\leq 
    k \leq K} \left( \mathbf{a}_{k}^{H}\sum\limits_{c=1}^{C}{{{{\mathbf{\tilde{w}}}}_{c,e}}\mathbf{\tilde{w}}_{c,e}^{H}}{{\mathbf{a}}_{k}} \right),
    \end{align}
    
    \begin{equation}
    \label{eq_l28}
     \left\{ \mathbf{\tilde{w}}_{c}^{opt} \right\}_{c=1}^{C}={\mathop{\max_{1\leq 
e \leq {N}_{e}} } }\,\mathcal{L}\left( \left\{ {{{\mathbf{\tilde{w}}}}_{c,e}} \right\}_{c=1}^{C} \right).
\end{equation}
\end{itemize}

We outline the steps for calculating the precoding matrix $\mathbf{W}_c$ in Algorithm 1. Solving the QSDP problem contributes to the primary complex flops in Algorithm 1. Referred from \cite{ref11}, with a given solution accuracy $\varepsilon $, the worst case complexity to solve the QSDP problem with the primal-dual interior-point method is $\mathcal{O}\left( {{C}^{6.5}}N_{t}^{6.5}\log \left( 1/\varepsilon  \right) \right)$.
\begin{algorithm}[H]
\caption{ISAC Transmit Beamforming Design via SDR.}\label{alg:alg1}
\begin{algorithmic}
\STATE 
\STATE {\textsc{Input:}}
\STATE \hspace{0.5cm}$ \text{Communicaiton parameters:} P_t , \sigma_c^2 , \Gamma, \left\{\mathbf{h}_c\right\}_{c=1}^{C}  $
\STATE \hspace{0.5cm}$ \text{Sensing parameters:} \left\{l_k , \phi_k \right\}_{k=1}^K $
\STATE {\textsc{Output:}}
\STATE \hspace{0.5cm}$ \text{The transmit beamforming matrix } \mathbf{W}_{c}  $
\STATE {\textsc{Steps:}}
\begin{enumerate}
    \item {Compute the optimal solution $\left\{ {{\mathbf{R}}_{c}} \right\}_{c=1}^{C}$ by solving $(\ref{eq_l24})$ with cvx}
    \item {Obtain $\left\{ {{{\mathbf{\tilde{w}}}}_{c,e}} \right\}_{c=1,e=1}^{C,{{N}_{e}}}$ via $(\ref{eq_l25})$
    \item If there exist feasible solutions in ${{N}_{e}}$ epochs, compute $\left\{ \mathbf{\tilde{w}}_{c}^{opt} \right\}_{c=1}^{C}$ via $(\ref{eq_l26})$}
    \item {Else, compute $\left\{ \mathbf{\tilde{w}}_{c}^{opt} \right\}_{c=1}^{C}$ via $(\ref{eq_l27})-(\ref{eq_l28})$}
    \item {Obtain beamforming matrix $\mathbf{W}_{c}=\left[ \mathbf{\tilde{w}}_{1}^{opt},...,\mathbf{\tilde{w}}_{C}^{opt} \right]$}
\end{enumerate}
\end{algorithmic}
\label{alg1}
\end{algorithm}

\section{Numerical Results}
In this section, we perform numerical experiments to evaluate the performance of the proposed beamforming design, namely “CRB-min Design". If not specifically indicated, we consider an ISAC BS equipped with a uniform linear array (ULA) with ${{N}_{t}}=16$ transmit antennas and ${{N}_{r}}=16$ receive antennas. The transmit power is ${{P}_{t}}=0$ dBW. The noise power is set to $-80$ dBm. There exists $C=4$ downlink CUs located at ${{\phi }_{c}}=\left[ -{{60}^\circ},-{{35}^\circ},{{35}^\circ},{{60}^\circ} \right]$. The SINR threshold is set as $\Gamma =5$ dB. The EST is assumed to be located ${{d}_{o}}=27$ m away from the BS with a direction of ${{\phi }_{o}}={0}^\circ$ and orientation of $\varphi ={{0}^\circ}$. The LoS contour is divided into $K=8$ disjoint subsections, parameterized by $Q = 8$ TFS harmonics with $\mathbf{m}=\left[2.05,-0.002,0.5,0,0.056,0.001,-0.125,0.003\right]^T$ and $\mathbf{n}=\left[1.24,-0.001,0.335,-0.001,0.124,-0.001,0.018,0\right]^T$. The observation period is set as $t_s = 1\text{ s}$.

\begin{figure}[!t]
\setlength{\abovecaptionskip}{0.cm}
\setlength{\belowcaptionskip}{0.cm}
\centering
\includegraphics[width=3.5in]{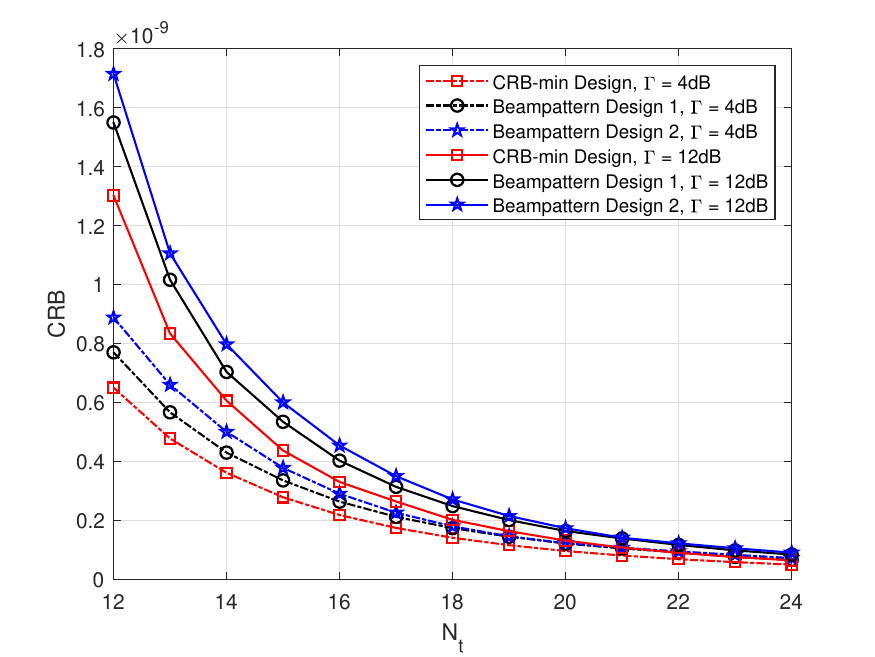}
\caption{CRBs as a function of transmit antenna numbers.}
\label{fig2}
\end{figure}

For comparsion, we select two beampattern-approaching precoder designs proposed in \cite{ref13} and \cite{ref11} as benchmarks, referred as “Beampattern Design 1” and “Beampattern Design 2”, respectively. The benchmark designs tend to allocate equal energy towards areas of interest, and we employ a main beam with a beamwidth of $\Delta ={10}^\circ$ in the simulation. The global direction grids are uniformly sampled in the range of $\left[ -{90}^\circ,{90}^\circ \right]$ with an interval of $1^\circ$.

The CRB value in terms of the numbers of transmit antennas and CUs are depicted in Fig. $\ref{fig2}$ and Fig. $\ref{fig3}$. An obvious CRB gap can be observed in Fig. $\ref{fig2}$ among all three beamforming schemes as the SINR threshold changes from 4 dB to 12 dB, which gradually narrows with the increase of antennas. A similar performance gap also exists in Fig. $\ref{fig3}$ where the CRB-min Design constantly yields the minimum CRB.

The fundamental difference of the aforementioned beamforming designs roots in the beam pattern preference. For the EST case, the beamforming design should consider both capturing the target direction and covering the whole LoS contour. The estimation of one specific direction prefers a sharp beam, whereas the effective reception of echo signals reflected from the LoS contour requires a sufficiently wide beam. The considered two benchmark beampattern designs only consider the contour coverage requirement by allocating equal energy to a given angle range, and yet ignores the direction capturing requirement. In contrast, our proposed design achieves a balance between the contradictory sharp and wide beam requests, generating a desirable transmit beampattern with the minimum CRB.

\begin{figure}[!t]
\setlength{\abovecaptionskip}{0.cm}
\setlength{\belowcaptionskip}{0.cm}
\centering
\includegraphics[width=3.5in]{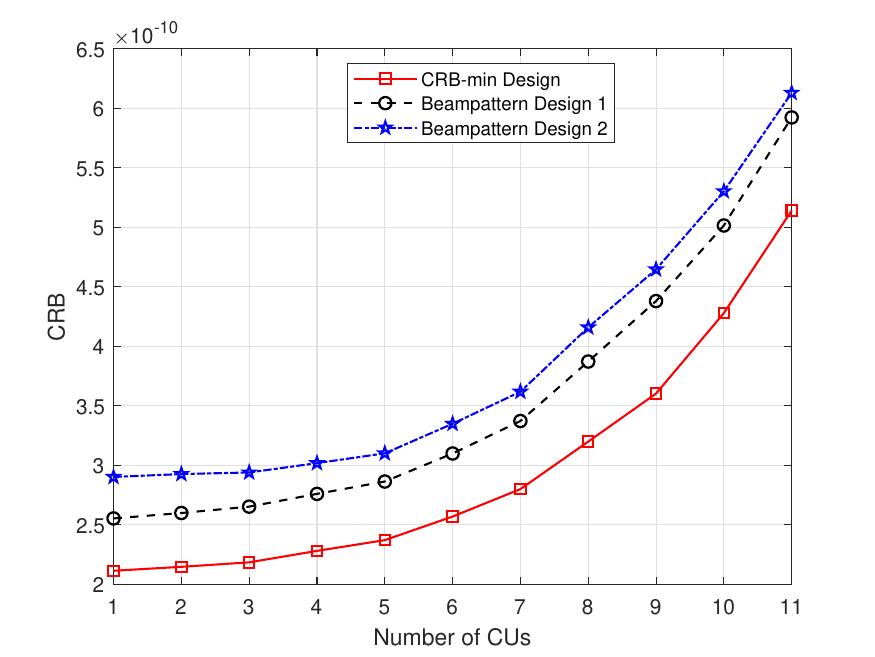}
\caption{CRBs as a function of CU numbers.}
\label{fig3}
\end{figure}

\section{Conclusion}
This paper considers the transmit beamforming design problem in an MU-MIMO ISAC system, where one BS communicates with multiple CUs and senses one extended target at the same time. Building on the basis of TFS target contour modeling, we first derive a closed-form CRB on the central direction of the EST. Then, the precoders at the BS are designed to minimize the CRB on the target direction while satisfying the SINR and power constraints for communication, as well as the beam coverage constraint for sensing. To solve the non-convex optimization problem, we employ SDR technique and propose rank-one solution extraction scheme for non-tight relaxation scenarios. Numerical results verify the robustness of our CRB-min beamforming design which outperforms benchmark designs with lower CRBs.

{\appendices
\section*{Appendix I \\ Proof of the Theorem 1}
\subsection*{A.\hspace{5pt}General Derivation of $\mathbf{J}(\boldsymbol{\kappa}_{1})$}
We commence by defining ${{\mathbf{d}}_{k}}=\mathbf{b}\left( {{\phi }_{k}} \right){{\mathbf{a}}^{H}}\left( {{\phi }_{k}} \right)\mathbf{x}\left( t-{2{{d}_{k}}} / {c} \right)$, the echo signal in $(\ref{eq_l8})$ can be correspondingly expressed as $\mathbf{e}\left( t \right)=g\sum\nolimits_{k=1}^{K}{\sqrt{{l}_{k}}}{{\alpha }_{k}}{\mathbf{d}_{k}}$. Under the definition of ${{\mathbf{I}}_{{{\boldsymbol{\kappa }}_{1}}}}=-\mathbb{E}\left[ \Delta _{{{\boldsymbol{\kappa }}_{1}}}^{{{\boldsymbol{\kappa }}_{1}}}\log p\left( {{\mathbf{y}}_{s}}|\boldsymbol{\xi } \right) \right]$, we have
\begin{align}
   {{\mathbf{I}}_{{{\boldsymbol{\kappa }}_{1}}}}&=\mathbb{E}\left[ \frac{2{{g}^{2}}}{\sigma _{s}^{2}}\mathcal{R}\int_{{{t}_{s}}}{\sum\limits_{{{k}_{1}}=1}^{K}{\sum\limits_{{{k}_{2}}=1}^{K}{\sqrt{{{l}_{{{k}_{1}}}}{{l}_{{{k}_{2}}}}}\alpha _{{{k}_{1}}}^{*}{{\alpha }_{{{k}_{2}}}}\frac{\partial \mathbf{d}_{{{k}_{1}}}^{H}}{\partial {{\boldsymbol{\kappa }}_{1}}}\frac{\partial {{\mathbf{d}}_{{{k}_{2}}}}}{\partial \boldsymbol{\kappa }_{1}^{T}}\text{d}t}}} \right] \nonumber \\ 
 & =\frac{2{{g}^{2}}}{\sigma _{s}^{2}}\mathcal{R}\int_{{{t}_{s}}}{\sum\limits_{{{k}_{1}}=1}^{K}{\sum\limits_{{{k}_{2}}=1}^{K}{\mathbb{E}\left[ \alpha _{{{k}_{1}}}^{*}{{\alpha }_{{{k}_{2}}}} \right]\sqrt{{{l}_{{{k}_{1}}}}{{l}_{{{k}_{2}}}}}\frac{\partial \mathbf{d}_{{{k}_{1}}}^{H}}{\partial {{\boldsymbol{\kappa }}_{1}}}\frac{\partial {{\mathbf{d}}_{{{k}_{2}}}}}{\partial \boldsymbol{\kappa }_{1}^{T}}}}}\text{d}t \nonumber\\ 
 & =\frac{2{{g}^{2}}}{\sigma _{s}^{2}}\sum\nolimits_{k=1}^{K}{{{l}_{k}}}\mathcal{R}\int_{{{t}_{s}}}{\frac{\partial \mathbf{d}_{k}^{H}}{\partial {{\boldsymbol{\kappa }}_{1}}}\frac{\partial {{\mathbf{d}}_{k}}}{\partial \boldsymbol{\kappa }_{1}^{T}}\text{d}t},
\end{align}
where the third equation holds true since ${{\alpha }_{k}} \sim \mathcal{CN}\left( 0,1 \right)$, $\mathbb{E}\left[ \alpha _{{{k}_{1}}}^{*}{{\alpha }_{{{k}_{2}}}} \right]=1$ for ${{k}_{1}}={{k}_{2}}$ and $\mathbb{E}\left[ \alpha _{{{k}_{1}}}^{*}{{\alpha }_{{{k}_{2}}}} \right]=0$ for ${{k}_{1}}\ne {{k}_{2}}$. We notice that the calculation of $\partial \mathbf{d}_{k}^{H}/\partial {{\boldsymbol{\kappa }}_{1}}$ is complicated since ${{\mathbf{d}}_{k}}$ is directly linked with intermediate variables ${{\boldsymbol{\Theta }}_{k}}=\left[ {{d}_{k}},{{\phi }_{k}} \right]^T$ which depend on the contour parameters $\left\{ {{a}_{q}},{{b}_{q}} \right\}_{q=1}^{Q}$. Thus, following the chain rule $ \partial\mathbf{d}_{k}^{H}/ \partial{{\boldsymbol{\kappa }}_{1}}=\frac{\partial \mathbf{d}_{k}^{H}}{\partial {{\boldsymbol{\Theta }}_{k}}}\frac{\partial \boldsymbol{\Theta }_{k}^{T}}{\partial {{\boldsymbol{\kappa }}_{1}}}$, we can rewrite the FIM as follows
\begin{equation}
\label{eq_l111}
    {{\mathbf{I}}_{{{\boldsymbol{\kappa }}_{1}}}}=\frac{2{{g}^{2}}}{\sigma _{s}^{2}}\sum\nolimits_{k=1}^{K}{{{l}_{k}}}\frac{\partial \boldsymbol{\Theta }_{k}^{T}}{\partial {{\boldsymbol{\kappa }}_{1}}}\left(\mathcal{R}\int_{{{t}_{s}}}{\frac{\partial \mathbf{d}_{k}^{H}}{\partial {{\boldsymbol{\Theta }}_{k}}}\frac{\partial {{\mathbf{d}}_{k}}}{\partial \boldsymbol{\Theta }_{k}^{T}}\text{d}t} \right)\frac{\partial {{\boldsymbol{\Theta }}_{k}}}{\partial \boldsymbol{\kappa }_{1}^{T}}.
\end{equation}

With the formulas in Appendix I-B to Appendix I-E, $(\ref{eq_l111})$ can be further written as

\begin{align}
{{\mathbf{I}}_{{{\boldsymbol{\kappa }}_{1}}}}=&\frac{2{{g}^{2}}N_{r}}{\sigma _{s}^{2}}\sum\limits_{k=1}^{K}{{{l}_{k}}\mathbf{a}_{k}^{H}{{\mathbf{R}}_{x}}{{\mathbf{a}}_{k}}} \nonumber\\
&\left[ \begin{matrix}
{{\mathbf{\mu }}_{k}} & {{\mathbf{\eta }}_{k}}  \\
\end{matrix} \right]\left[ \begin{matrix}
{{Z}_{2}} & 0  \\
0 & t_s({{Z}_{1}}+\frac{\mathbf{\dot{a}}_{k}^{H}{{\mathbf{R}}_{x}}{{{\mathbf{\dot{a}}}}_{k}}}{\mathbf{a}_{k}^{H}{{\mathbf{R}}_{x}}{{\mathbf{a}}_{k}}})  \\
\end{matrix} \right]\left[ \begin{matrix}
\mathbf{\mu }_{k}^{T}  \\
\mathbf{\eta }_{k}^{T}  \\
\end{matrix} \right] \nonumber\\ 
 =&\frac{2{{g}^{2}}N_{r}}{\sigma _{s}^{2}}\sum\limits_{k=1}^{K}{{{l}_{k}}\mathbf{a}_{k}^{H}{{\mathbf{R}}_{x}}{{\mathbf{a}}_{k}}} \nonumber \\
&\left[ {{Z}_{2}}{{\mathbf{\mu }}_{k}}\mathbf{\mu }_{k}^{T}+\left( {{Z}_{1}}+\frac{\mathbf{\dot{a}}_{k}^{H}{{\mathbf{R}}_{x}}{{{\mathbf{\dot{a}}}}_{k}}}{\mathbf{a}_{k}^{H}{{\mathbf{R}}_{x}}{{\mathbf{a}}_{k}}} \right){{t_s\mathbf{\eta }}_{k}}\mathbf{\eta }_{k}^{T} \right].
\label{eq_l29}
\end{align}

Similar to the calculation steps in $(\ref{eq_l29})$, we obtain ${{i}_{g}}$ and ${{\mathbf{i}}_{{{\boldsymbol{\kappa }}_{1}},g}}$ as following
\begin{equation}
{{i}_{g}}=\frac{2{{N}_{r}}t_s}{\sigma _{s}^{2}}\sum\limits_{k=1}^{K}{{{l}_{k}}\mathbf{a}_{k}^{H}{{\mathbf{R}}_{x}}{{\mathbf{a}}_{k}}},
\end{equation}

\begin{align}
{{\mathbf{i}}_{{{\boldsymbol{\kappa }}_{1}},g}}  &=\frac{2g}{\sigma _{s}^{2}}\sum\limits_{k=1}^{K}{{{l}_{k}}\frac{\partial \boldsymbol{\Theta }_{k}^{T}}{\partial {{\boldsymbol{\kappa }}_{1}}}}\left( \mathcal{R}\int_{{{t}_{s}}}{\frac{\partial \mathbf{d}_{k}^{H}}{\partial {{\boldsymbol{\Theta }}_{k}}}{{\mathbf{d}}_{k}}\text{d}t} \right) \nonumber \\
&=\left[0 \quad  
   \frac{g{{N}_{r}}t_s}{\sigma _{s}^{2}}\sum\limits_{k=1}^{K}{{{l}_{k}}\left( \mathbf{\dot{a}}_{k}^{H}{{\mathbf{R}}_{x}}{{\mathbf{a}}_{k}}+\mathbf{a}_{k}^{H}{{\mathbf{R}}_{x}}{{{\mathbf{\dot{a}}}}_{k}} \right)} \quad 0\right]^T.
\label{eq_l30}
\end{align}

Finally, combine formulas $(\ref{eq_l29})-(\ref{eq_l30})$ with formula $(\ref{eq_l31})$, we can get the closed-form CRBs in $(\ref{eq_l5})-(\ref{eq_l7})$.

\subsection*{B.\hspace{5pt}Derivation of ${\partial \boldsymbol{\Theta }_{k}^{T}}/{\partial {{\boldsymbol{\kappa }}_{1}}}$}
We decompose $\partial \boldsymbol{\Theta }_{k}^{T}/\partial {{\boldsymbol{\kappa }}_{1}}=\left[ \partial {{d}_{k}}/\partial {{\boldsymbol{\kappa }}_{1}} \quad \partial {{\phi }_{k}}/\partial {{\boldsymbol{\kappa }}_{1}} \right]$ as following

\begin{align}
{{\mathbf{\mu }}_{k}}&=\frac{\partial {{{d}}_{k}}}{\partial {{\boldsymbol{\kappa }}_{1}}}=\left[ 
   \frac{\partial {{{d}}_{k}}}{\partial {{d}_{o}}} \quad \frac{\partial {{{d}}_{k}}}{\partial {{\phi }_{o}}} \quad
   \frac{\partial {{{d}}_{k}}}{\partial \varphi }  \right]^T \nonumber \\
 &=\frac{1}{{{d}_{k}}}\left[ \begin{matrix}
   {{d}_{o}}+d_{o}^{-1}\mathbf{r}_{k}^{T}{{\mathbf{V}}^{T}}{{\mathbf{p}}_{k}}  \\
   \mathbf{r}_{k}^{T}{{\mathbf{V}}^{T}}{{\mathbf{p}}_{\bot ,k}}  \\
   -\mathbf{r}_{k}^{T}{{\mathbf{V}}^{T}}{{\mathbf{p}}_{\bot ,k}}  \\
\end{matrix} \right]\approx \left[ \begin{matrix} 1  \\
   \mathbf{r}_{k}^{T}{{\mathbf{V}}^{T}}{{\mathbf{p}}_{\bot }}/{{d}_{o}}  \\
   -\mathbf{r}_{k}^{T}{{\mathbf{V}}^{T}}{{\mathbf{p}}_{\bot }}/{{d}_{o}}  \\
\end{matrix} \right],
\end{align}

\begin{align}
{{\mathbf{\eta }}_{k}}&=\frac{\partial {{\phi }_{k}}}{\partial {{\boldsymbol{\kappa }}_{1}}}=\left[ 
   \frac{\partial {{\phi }_{k}}}{\partial {{d}_{o}}}  \quad
   \frac{\partial {{\phi }_{k}}}{\partial {{\phi }_{o}}}  \quad
   \frac{\partial {{\phi }_{k}}}{\partial \varphi }   \right]^T \nonumber \\
   &=\frac{1}{d_{k}^{2}}\left[ \begin{matrix}
   d_{o}^{-1}\mathbf{r}_{k}^{T}{{\mathbf{V}}^{T}}{{\mathbf{p}}_{\bot ,k}}  \\
   d_{o}^{2}+\mathbf{r}_{k}^{T}{{\mathbf{V}}^{T}}{{\mathbf{p}}_{k}}  \\
   \mathbf{r}_{k}^{T}{{\mathbf{V}}^{T}}{{\mathbf{p}}_{k}}  \\
\end{matrix} \right]\approx \left[ \begin{matrix}
   0  \\
   1  \\
   0  \\
\end{matrix} \right],
\end{align}
where parameters $\left\{ {{d}_{k}},{{\phi }_{k}},{{\mathbf{r}}_{k}}\mathbf{,}{{\mathbf{p}}_{k}} \right\}$ are respectively the range, direction, local position and global position of the $k$-th contour section, ${{\mathbf{p}}_{\bot ,k}}=\left( \begin{matrix} 0 & -1  \\ 1 & 0  \\ \end{matrix} \right){{\mathbf{p}}_{k}}$, ${{\mathbf{p}}_{\bot }}=\left( \begin{matrix} 0 & -1  \\ 1 & 0  \\ \end{matrix} \right){{\mathbf{p}}_{o}}$.

\subsection*{C.\hspace{5pt}Derivation of $\mathcal{R}\int_{{{t}_{s}}}{\frac{\partial \mathbf{d}_{k}^{H}}{\partial {{\boldsymbol{\Theta }}_{k}}}\frac{\partial {{\mathbf{d}}_{k}}}{\partial \boldsymbol{\Theta }_{k}^{T}}\text{d}t}$}
We start with the calculation of $\frac{\partial {{\mathbf{d}}_{k}}}{\partial \boldsymbol{\Theta }_{k}^{T}}=\left[ \frac{\partial {{\mathbf{d}}_{k}}}{\partial {{d}_{k}}}\text{ }\frac{\partial {{\mathbf{d}}_{k}}}{\partial {{\phi }_{k}}} \right]$
\begin{equation}
\frac{\partial {{\mathbf{d}}_{k}}}{\partial {{d}_{k}}}=-\frac{2}{c}{{\mathbf{b}}_{k}}\mathbf{a}_{k}^{H}\mathbf{\dot{x}}\left( t-\frac{2{{d}_{k}}}{c} \right),
\end{equation}
\begin{equation}
\frac{\partial {{\mathbf{d}}_{k}}}{\partial {{\phi }_{k}}}=\left( {{{\mathbf{\dot{b}}}}_{k}}\mathbf{a}_{k}^{H}+{{\mathbf{b}}_{k}}\mathbf{\dot{a}}_{k}^{H} \right)\mathbf{x}\left( t-\frac{2{{d}_{k}}}{c} \right),
\end{equation}
where $\mathbf{\dot{x}}\left( t \right)=\partial \mathbf{x}\left( t \right)/\partial t$, ${{\mathbf{\dot{a}}}_{k}}=\partial {{\mathbf{a}}_{k}}/\partial \phi $ and ${{\mathbf{\dot{b}}}_{k}}=\partial {{\mathbf{b}}_{k}}/\partial \phi $.

With the extra identities listed in Appendix I-E, we further derive $\mathcal{R}\int_{{{t}_{s}}}{\frac{\partial \mathbf{d}_{k}^{H}}{\partial {{\boldsymbol{\Theta }}_{k}}}\frac{\partial {{\mathbf{d}}_{k}}}{\partial \boldsymbol{\Theta }_{k}^{T}}\text{d}t}$ as
\begin{equation}
\mathcal{R}\int_{{{t}_{s}}}{\frac{\partial \mathbf{d}_{k}^{H}}{\partial {{d}_{k}}}\frac{\partial {{\mathbf{d}}_{k}}}{\partial {{d}_{k}}}\text{d}t}={{N}_{r}}{{Z}_{2}}\mathbf{a}_{k}^{H}{\mathbf{R}_{x}}{{\mathbf{a}}_{k}},
\end{equation}
\begin{equation}
\mathcal{R}\int_{{{t}_{s}}}{\frac{\partial \mathbf{d}_{k}^{H}}{\partial {{\phi }_{k}}}\frac{\partial {{\mathbf{d}}_{k}}}{\partial {{\phi }_{k}}}\text{d}t}={{N}_{r}t_s}\left( {{Z}_{1}}\mathbf{a}_{k}^{H}{\mathbf{R}_{x}}{{\mathbf{a}}_{k}}+\mathbf{\dot{a}}_{k}^{H}{\mathbf{R}_{x}}{{{\mathbf{\dot{a}}}}_{k}} \right),
\end{equation}
\begin{equation}
\mathcal{R}\int_{{{t}_{s}}}{\frac{\partial \mathbf{d}_{k}^{H}}{\partial {{d}_{k}}}\frac{\partial {{\mathbf{d}}_{k}}}{\partial {{\phi }_{k}}}\text{d}t}=0.
\end{equation}

The complete matrix writes as
\begin{equation}
\mathcal{R}\int_{{{t}_{s}}}{\frac{\partial \mathbf{d}_{k}^{H}}{\partial {{\boldsymbol{\Theta }}_{k}}}\frac{\partial {{\mathbf{d}}_{k}}}{\partial \boldsymbol{\Theta }_{k}^{T}}\text{d}t}={{N}_{r}}\mathbf{a}_{k}^{H}{{R}_{x}}{{\mathbf{a}}_{k}}\left[ \begin{matrix}
   {{Z}_{2}} & 0  \\
   0 & t_s({{Z}_{1}}+\frac{\mathbf{\dot{a}}_{k}^{H}{{\mathbf{R}}_{x}}{{{\mathbf{\dot{a}}}}_{k}}}{\mathbf{a}_{k}^{H}{\mathbf{R}_{x}}{{\mathbf{a}}_{k}}})  \\
\end{matrix} \right].
\end{equation}

\subsection*{D.\hspace{5pt}Derivation of $\mathcal{R}\int_{{{t}_{s}}}{\frac{\partial \mathbf{d}_{k}^{H}}{\partial {{\boldsymbol{\Theta }}_{k}}}{{\mathbf{d}}_{k}}\text{d}t}$}
Use the identities in Appendix I-E, we make following derivation
\begin{equation}
\mathcal{R}\int_{{{t}_{s}}}{\frac{\partial \mathbf{d}_{k}^{H}}{\partial {{d}_{k}}}{{\mathbf{d}}_{k}}\text{d}t}=0,
\end{equation}

\begin{align}
\mathcal{R}\int_{{{t}_{s}}}{\frac{\partial \mathbf{d}_{k}^{H}}{\partial {{\phi }_{k}}}{{\mathbf{d}}_{k}}\text{d}t}&={{N}_{r}t_s}\mathcal{R}\left( \mathbf{\dot{a}}_{k}^{H}{{\mathbf{R}}_{x}}{{\mathbf{a}}_{k}} \right) \nonumber \\
&=\frac{{{N}_{r}t_s}}{2}\left( \mathbf{\dot{a}}_{k}^{H}{{\mathbf{R}}_{x}}{{\mathbf{a}}_{k}}+\mathbf{a}_{k}^{H}{{\mathbf{R}}_{x}}{{{\mathbf{\dot{a}}}}_{k}} \right).
\end{align}

The complete matrix writes as
\begin{equation}
\mathcal{R}\int_{{{t}_{s}}}{\frac{\partial \mathbf{d}_{k}^{H}}{\partial {{\boldsymbol{\Theta }}_{k}}}{{\mathbf{d}}_{k}}\text{d}t}=\left[ \begin{matrix}
   0 & \frac{{{N}_{r}}t_s}{2}\left( \mathbf{\dot{a}}_{k}^{H}{{\mathbf{R}}_{x}}{{\mathbf{a}}_{k}}+\mathbf{a}_{k}^{H}{{\mathbf{R}}_{x}}{{{\mathbf{\dot{a}}}}_{k}} \right)  \\
\end{matrix} \right].
\end{equation}

\subsection*{E.\hspace{5pt}Related Identities}
To facilitate the calculations, we take the center of the ULA as the reference point with zero phase. Thus, the transmit and receive antenna response can be presented as 
\begin{align}
&{{\mathbf{a}}_{k}}=\exp \left( j\pi \left( {{N}_{t}}-1 \right)/2\sin {{\phi }_{k}} \right) \nonumber\\
&\hspace*{.8cm}{{\left[ 1\text{ exp}\left( -j\pi \sin {{\phi }_{k}} \right)\text{ }...\text{ exp}\left( -j\left( {{N}_{t}}-1 \right)\pi \sin {{\phi }_{k}} \right) \right]}^{T}},
\end{align}
\setlength\abovedisplayskip{1pt}
\setlength\belowdisplayskip{1pt}
\begin{align}
&{{\mathbf{b}}_{k}}=\exp \left( j\pi \left( {{N}_{r}}-1 \right)/2\sin {{\phi }_{k}} \right) \nonumber\\
&\hspace*{.8cm}{{\left[ 1\text{ exp}\left( -j\pi \sin {{\phi }_{k}} \right)\text{ }...\text{ exp}\left( -j\left( {{N}_{r}}-1 \right)\pi \sin {{\phi }_{k}} \right) \right]}^{T}}.
 \end{align}

With basic operations, we can further calculate that ${{\left\| {{\mathbf{b}}_{k}} \right\|}^{2}}={{N}_{r}}$, and get
\begin{equation}
{{\mathbf{\dot{a}}}_{k}}=j\pi \cos \left( {{\phi }_{k}} \right)\text{diag}\left( \frac{{{N}_{t}}-1}{2},...,-\frac{{{N}_{t}}-1}{2} \right){{\mathbf{a}}_{k}},
\end{equation}
\begin{equation}
{{\mathbf{\dot{b}}}_{k}}=j\pi \cos \left( {{\phi }_{k}} \right)\text{diag}\left( \frac{{{N}_{r}}-1}{2},...,-\frac{{{N}_{r}}-1}{2} \right){{\mathbf{b}}_{k}},
\end{equation}
\begin{equation}
{{\left\| {{{\mathbf{\dot{b}}}}_{k}} \right\|}^{2}}=\frac{{{\cos }^{2}}\left( {{\phi }_{k}} \right){{\pi }^{2}}\left( N_{r}^{3}-{{N}_{r}} \right)}{12}={{N}_{r}}{{Z}_{1}}.
\end{equation}

Next, with the basic assumption of $\mathbb{E} \left[\mathbf{c}(t) \mathbf{c}^H (t) \right] = \mathbf{I}_{C}$, it is clear that $\int_{t_s} \mathbf{c}(t) \mathbf{c}^H (t) \text{d}t = t_s\mathbf{I}_{C}$. Since we are considering a sufficient long observation period $t_s$, the following Fourier transform derivation holds when the integration is calculated within $t_s$
\begin{align}
  & \int_{t_s}{c_{{{c}_{1}}}^{*}\left( t-\tau  \right){{{\dot{c}}}_{{{c}_{2}}}}\left( t-\tau  \right)\text{d}t} \nonumber\\ 
 & \approx\int_{-\infty }^{+\infty }{{{\left[ {{C}_{{{c}_{1}}}}\left( f \right){{e}^{-j2\pi \tau f}} \right]}^{*}}j2\pi f{{C}_{{{c}_{2}}}}\left( f \right){{e}^{-j2\pi \tau f}}\text{d}f} \nonumber\\ 
 & =j2\pi \int_{-\infty }^{+\infty }{fC_{{{c}_{1}}}^{*}\left( f \right){{C}_{{{c}_{2}}}}\left( f \right)\text{d}f}=0, c_{1,2}=1,...,C.
 \label{eq_l33}
\end{align}

For ${{c}_{1}}\ne {{c}_{2}}$, $(\ref{eq_l33})$ holds since $\int_{-\infty }^{+\infty }{fC_{{{c}_{1}}}^{*}\left( f \right){{C}_{{{c}_{2}}}}\left( f \right)\text{d}f}=0$ given the irrelevance between ${{C}_{{{c}_{1}}}}\left( f \right)$ and ${{C}_{{{c}_{2}}}}\left( f \right)$. For ${{c}_{1}}={{c}_{2}}=c$, the integral term $\int_{-\infty }^{+\infty }{f{{\left| {{C}_{c}}\left( f \right) \right|}^{2}}\text{d}f}$ can be regarded as the mass center of the signal spectrum ${{\left| {{C}_{c}}\left( f \right) \right|}^{2}}$. Since such center can be shifted arbitrarily in the frequency domain, we can locate it at a specific point with zero value such that $(\ref{eq_l33})$ still holds. Further, we have

\begin{align}
\label{eq_l34}
&\int_{t_s}{\dot{c}_{{{c}_{1}}}^{*}\left( t-\tau  \right){{{\dot{c}}}_{{{c}_{2}}}}\left( t-\tau  \right)\text{d}t} \nonumber\\
&\approx{{\left( 2\pi  \right)}^{2}}\int_{-\infty }^{+\infty }{{{f}^{2}}C_{{{c}_{1}}}^{*}\left( f \right){{C}_{{{c}_{2}}}}\left( f \right)\text{d}f} \nonumber\\
&=\left\{ \begin{matrix}{{\left( 2\pi B \right)}^{2}},{{c}_{1}}={{c}_{2}}  \\ 0,{{c}_{1}}\ne {{c}_{2}} \\
\end{matrix} \right., c_{1,2}=1,...,C.
\end{align}

Similar with $(\ref{eq_l33})$, for ${{c}_{1}}\ne {{c}_{2}}$, ${{C}_{{{c}_{1}}}}\left( f \right)$ is irrelevant with ${{C}_{{{c}_{2}}}}\left( f \right)$ so that $(\ref{eq_l34})$ holds. For ${{c}_{1}}= {{c}_{2}}=c$, $(\ref{eq_l34})$ follows the definition of $B=\left(\int_{-\infty }^{+\infty }{{{f}^{2}}{{\left| {{S}_{c}}\left( f \right) \right|}^{2}}\text{d}f}\right)^{1/2}$ which is the effective bandwidth of the ISAC system.}



\begin{thebibliography}{1}
\providecommand{\url}[1]{#1}
\csname url@samestyle\endcsname
\providecommand{\newblock}{\relax}
\providecommand{\bibinfo}[2]{#2}
\providecommand{\BIBentrySTDinterwordspacing}{\spaceskip=0pt\relax}
\providecommand{\BIBentryALTinterwordstretchfactor}{4}
\providecommand{\BIBentryALTinterwordspacing}{\spaceskip=\fontdimen2\font plus
\BIBentryALTinterwordstretchfactor\fontdimen3\font minus
  \fontdimen4\font\relax}
\providecommand{\BIBforeignlanguage}[2]{{%
\expandafter\ifx\csname l@#1\endcsname\relax
\typeout{** WARNING: IEEEtran.bst: No hyphenation pattern has been}%
\typeout{** loaded for the language `#1'. Using the pattern for}%
\typeout{** the default language instead.}%
\else
\language=\csname l@#1\endcsname
\fi
#2}}
\providecommand{\BIBdecl}{\relax}
\BIBdecl

\bibitem{ref4}
F.~Liu, Y.-F. Liu, A.~Li, C.~Masouros, and Y.~C. Eldar, ``Cram{\'e}r-{Rao}
  bound optimization for joint radar-communication beamforming,'' \emph{IEEE
  Trans. Signal Process.}, vol.~70, pp. 240--253, 2021.

\bibitem{ref5}
X.~Song, J.~Xu, F.~Liu, T.~X. Han, and Y.~C. Eldar, ``Intelligent reflecting
  surface enabled sensing: Cram{\'e}r-{Rao} bound optimization,'' \emph{IEEE
  Trans. Signal Process.}, 2023, doi:{10.1109/TSP.2023.3280715}.

\bibitem{ref6}
N.~Garcia, A.~Fascista, A.~Coluccia, H.~Wymeersch, C.~Aydogdu, R.~Mendrzik, and
  G.~Seco-Granados, ``Cram{\'e}r-{Rao} bound analysis of radars for extended
  vehicular targets with known and unknown shape,'' \emph{IEEE Trans. Signal
  Process.}, vol.~70, pp. 3280--3295, 2022.

\bibitem{ref9}
M.~Grant and S.~Boyd, ``{CVX}: Matlab software for disciplined convex
  programming, version 2.1,'' 2014.

\bibitem{ref10}
L.~Godara, ``Optimal and suboptimal transmit beamforming,'' in \emph{Handbook
  of Antennas in Wireless Communications}, 2001.

\bibitem{ref11}
X.~Liu, T.~Huang, N.~Shlezinger, Y.~Liu, J.~Zhou, and Y.~C. Eldar, ``Joint
  transmit beamforming for multiuser {MIMO} communications and {MIMO} radar,''
  \emph{IEEE Trans. Signal Process.}, vol.~68, pp. 3929--3944, 2020.

\bibitem{ref13}
H.~Hua, J.~Xu, and T.~X. Han, ``Optimal transmit beamforming for integrated
  sensing and communication,'' \emph{IEEE Trans. Veh. Technol.}, 2023,
  doi:{10.1109/TVT.2023.3262513}.

\end{thebibliography}
\end{document}